\newtheorem{thm}{Theorem}[section]
\newtheorem {asp}{Assumption}[section]
\newtheorem{lm}{Lemma}[section]
\theoremstyle{definition}
\theoremstyle{remark}
\newtheorem{rem}{Remark}[section]
\numberwithin{equation}{section}
\newcommand{\eps}{\varepsilon}
\newcommand{\M}{\mathcal{M}}
\newcommand{\E}{\mathbb{E}}
\newcommand{\Lom}{\mathcal{L}}
\newcommand{\Z}{\mathbb{Z}}
\newcommand{\PP}{\mathbb{P}}
\newcommand{\R}{\mathbb{R}}
\numberwithin{equation}{section}
\newcommand{\1}{\boldsymbol{1}}
\newcommand{\wdt}{\widetilde}
\newcommand{\bed}{\begin{equation}}
	\newcommand{\eed}{\end{equation}}
\newcommand{\bea}{\bed\begin{array}{rl}}
	\newcommand{\eea}{\end{array}\eed}
\newcommand{\barray}{\begin{array}{ll}}
	\newcommand{\earray}{\end{array}}
\newcommand{\diag}{{\rm diag}}
\def\disp{\displaystyle}
\def\bar{\overline}
\def\hat{\widehat}
\def\a.s{\text{\;a.s.\;}}
\def\bnu{\boldsymbol{\nu}}
\def\bmu{\boldsymbol{\nu}}
\newcommand{\Se}{\overline{\mathcal{S}}}
\newcommand{\wtau}{\widetilde\tau}
\begin{document}
	\title{Long-term behavior of stochastic SIQRS epidemic models}
	\author[A. Hening]{Alexandru Hening }
	\address{Department of Mathematics\\
		Texas A\&M University\\
		Mailstop 3368\\
		College Station, TX 77843-3368\\
		United States
                }
	\email{ahening@tamu.edu}

	\author[D.H. Nguyen]{Dang H. Nguyen }
	\address{Department of Mathematics \\
		University of Alabama\\
		345 Gordon Palmer Hall\\
		Box 870350 \\
		Tuscaloosa, AL 35487-0350 \\
		United States
                }
	\email{dangnh.maths@gmail.com}
	
        \author[Trang Ta]{Trang Ta}
	\address{Department of Mathematics \\
                 University of Alabama\\
                 345 Gordon Palmer Hall\\
                 Box 870350 \\
                 Tuscaloosa, AL 35487-0350 \\
                 United States
                 }
        \email{ttta@crimson.ua.edu}
	
	\author[S. Ungureanu]{Sergiu C. Ungureanu}
        \address{Department of Economics\\
                 City, University of London\\
                 Northampton Square\\
                 London EC1V 0HB\\
                 United Kingdom
                 }
        \email{Sergiu.Ungureanu.1@city.ac.uk}

\date{}
	
	
\begin{abstract}
In this paper we analyze and classify the dynamics of SIQRS epidemiological models with susceptible, infected, quarantined, and recovered classes, where the recovered individuals can become reinfected.
We are able to treat general incidence functional responses.
Our models are more realistic than what has been studied in the literature since they include two important types of random fluctuations.
The first type is due to small fluctuations of the various model parameters and leads to white noise terms.
The second type of noise is due to significant environment regime shifts in the that can happen at random.
The environment switches randomly between a finite number of environmental states, each with a possibly different disease dynamic.
We prove that the long-term fate of the disease is fully determined by a real-valued threshold $\lambda$.
When $\lambda < 0$ the disease goes extinct asymptotically at an exponential rate.
On the other hand, if $\lambda > 0$ the disease will persist indefinitely.
We end our analysis by looking at some important examples where $\lambda$ can be computed explicitly, and by showcasing some simulation results that shed light on real-world situations.
\bigskip

\noindent {\bf Keywords.} switching diffusion, epidemic model, ergodicity, invariant measure, quarantine, temporary immunity. 
\end{abstract}
	
	\maketitle
	
	\section{Introduction}\label{sec:int}
	Severe pandemics of infectious diseases have historically been, and will likely continue to be, a serious worldwide systemic risk.
	This risk needs to be understood and controlled. The policy debates around the most recent Covid-19 outbreak are a reminder that we need to have better and more realistic models.
	In this paper we analyze realistic stochastic epidemic models that can help to better understand and predict the evolution of outbreaks.
	The main goal is to be able to offer a straightforward and easy way to make policy recommendations. 
	
	The history of the mathematical modelling of epidemic models is long.
	Kermack and McKendrick, considered the fathers of mathematical epidemiology, were the first to analyze rigorously deterministic epidemic models - this happened in series of papers called ``Contributions to the Mathematical theory of epidemics" \cite{kermack1927contribution, kermack1932contributions}. In the first paper written in 1927, they assumed that complete immunity is conferred by a single attack, and that an individual is not infective before the moment he gets infected.
	They supposed that the population is subdivided into three distinct classes: the susceptible class ($S$), infective class ($I$) and the recovered, or removed, class ($R$).
	Individuals will be transferred consecutively between the classes, $S\to I\to R$. This kind of model is known in the literature as an SIR model.\footnote{A person who recovered will have permanent immunity in this type of model.}
	The dynamics can be modeled by the system of ordinary differential equations
	\begin{equation}
		\begin{cases}
			\begin{aligned}
				dS(t)&=\big(A-I(t)F\big(S(t), I(t)\big)-\mu S(t)+\gamma I(t)\big)dt,\\
				dI(t)&=I(t)\big(F\big(S(t), I(t)\big)-(\mu+\gamma_1+\gamma_2)\big)dt,\\
				dR(t)&=\big(\gamma_2 I(t)-(\mu+\gamma)R(t)\big)dt,
			\end{aligned}
		\end{cases}
	\end{equation}
	where $A>0$ is the birth rate of the population, $\mu$ is the independent death rate of all the groups, $\gamma_1>0$ is the death rate of infected individuals, $\gamma_2>0$ is the recovery rate of infected individuals, $\gamma>0$ is the rate of loss of immunity in the recovered group, and $I(t)F\big(S(t), I(t)\big)$ indicates the incidence rate, which describes the number of new cases per unit of time.
	There are many different types of incidence rates in the literature -- see \cite{nguyen2020long, phu2020longtime, huang2009global,dieu2016classification,du2017permanence,du2019conditions,jiang2016dynamical, liu2018analysis, nguyen2020analysis, nguyen2019stochastic, nie2018dynamic, ruan2003dynamical, zhang2013sirs, zhang2015threshold, tang2008coexistence} and the references therein. 
	
	It is not always the case that SIR models work well, as some of their assumptions could get violated. For some diseases, like syphilis or Covid-19 \cite{M21}, recovered patients can become susceptible again.
	In order to have more realistic models which can capture such diseases people have started looking at SIRS epidemic models where one can have dynamics of the form $S\to I\to R\to S$.
	During the Covid pandemic governments made use of quarantines, where infected individuals were separated in order to lessen the spread of the pandemic to the susceptible population.
	Quarantines have been widely used throughout the ages so it is natural to consider models where there is a quarantined class $Q$ of individuals.
	Quarantine leads to the isolation of infected individuals in order to try and contain the spread of the disease.
	We will assume that isolation makes it impossible for the quarantined to get in contact with susceptible individuals.
	As a result the transmissions of the infection to the susceptible class is reduced.
	This method has been used for thousands of years to reduce the transmission of diseases among human beings, from leprosy to the plague, TBC, Ebola, smallpox, etc.
	Models that include quarantine were thought to be appropriate for childhood diseases, where quarantine seems to destabilize the epidemic. However, this is sometimes detrimental as the destabilization can lead to oscillations and recurrent outbreaks of childhood diseases \cite{FT95}. 
	There are various ways to model the individuals who recover from a disease.
	In certain diseases, the recovered individuals are still susceptible to infection (SIQS models), while in others, recovered individuals are better modeled as permanently immune to the infection (SIQR models).
	SIQRS models combine features of these models for more generality, letting parametrisation decide the correct evolution of the susceptible and recovered categories.
	
	\begin{figure}
		\begin{center}
			\begin{tikzpicture}[
				skip loop/.style={to path={-- ++(0,#1) -| (\tikztotarget)}},
				thick,
				nonterminal/.style = {
					rectangle,			
					minimum size = 10mm,			
					thick,
					draw = black,
				}
				]
				\matrix [row sep=10mm, column sep=15mm] {
					\node (pa){}; &&&&&&\\
					\node (s) [nonterminal] {$S$}; &&
					\node (i) [nonterminal] {$I$}; && 
					\node (q) [nonterminal] {$Q$}; && 
					\node (r) [nonterminal] {$R$}; \\
					\node (pm1){}; && 
					\node (pmg1){}; && 
					\node (pmg4){}; && 
					\node (pm2){}; \\ 
				};
				\path (pa) edge[->, shorten >=5pt, transform canvas={xshift=-0.5ex}] node[left]{$A$} node[right, transform canvas={yshift=1ex}]{$\gamma_5 R$} (s);
				\path (s) edge[->, shorten <=5pt, shorten >=5pt] node[above]{$F(S,I) I$} (i);
				\path (i) edge[->, shorten <=5pt, shorten >=5pt] node[above]{$\gamma_2 I$} (q);
				\path (q) edge[->, shorten <=5pt, shorten >=5pt] node[above]{$\gamma_3 Q$} (r);
				\path (r) edge[->, shorten <=5pt, shorten >=5pt, transform canvas={xshift=0.5ex}, skip loop=10mm ] node[right]{} (s);
				\path (s) edge[->, shorten <=5pt, shorten >=5pt] node[left]{$\mu S$}(pm1);
				\path (i) edge[->, shorten <=5pt, shorten >=5pt] node[left]{$(\mu + \gamma_1) I$}(pmg1);
				\path (q) edge[->, shorten <=5pt, shorten >=5pt] node[left]{$(\mu + \gamma_4) Q$}(pmg4);
				\path (r) edge[->, shorten <=5pt, shorten >=5pt] node[left]{$\mu R$}(pm2);
			\end{tikzpicture}
		\end{center}
		\caption{Schematic representation of the system \eqref{ww0} and of the nonstochastic effects on the population groupings from \eqref{ww1}.}
		\label{fig:syst}
	\end{figure}
	
	The SIQRS dynamics can be modelled, in the absence of noise, by the system of ordinary differential equations
	\begin{equation}\label{ww0}
		\begin{cases}
			\begin{aligned}
				dS(t)&=\Big[A-F\big(S(t),I(t)\big)I(t)-\mu S(t)+\gamma_5 R(t)\Big]dt,\\
				dI(t)&=\Big[F\big(S(t),I(t)\big)-\mu-\gamma_1-\gamma_2\Big]I(t)dt,\\
				dQ(t)&=\Big[\gamma_2 I(t)-\left(\mu+\gamma_3+\gamma_4\right)Q(t)\Big]dt,\\
				dR(t)&=\Big[\gamma_3 Q(t)-\left(\mu+\gamma_5\right)R(t)\Big]dt.
			\end{aligned}
		\end{cases}
	\end{equation}
	
	Here, $A$ is again the population growth rate and $\mu$ is the independent death rate, common to all groups.
	$\gamma_1$ is the death rate specific to the infected category, attributed to the disease, $\gamma_2$ is the rate of transfer of the infected into the quarantined category, $\gamma_3$ is the rate of recovery of the quarantined individuals, $\gamma_4$ is the rate of death in the quarantined group, also attributed to the disease, and finally $\gamma_5$ is the rate of transfer of the recovered back to the susceptible group.
	If this last parameter is 0, then the recovered stay immune.
	The term $F\big(S,I\big)I(t)$ is the incidence rate and it describes the number of new infections per unit of time.
	$F\big(S,I\big)$ is the incidence rate per infective individual.
	In the most common model $F\big(S,I\big)I =\beta I S$, but it turns out that it is very important to consider more general nonlinear functions $F$. 
	
	There are many plausible mechanisms that can lead to nonlinear incidence rates.
	Many common pathogens cannot live separated from the host for long periods of time.
	In addition, susceptible individuals can usually be infected only once the concentration of the pathogen reaches a certain level.
	At low densities of infected individuals, low $I$, the threshold for infectivity may not be reached often.
	If $I$ increases, the threshold will be reached more often and infections will rise faster.
	If the disease is vectored and the vector must attack on average $q$ infective individuals in order to make its next attack infective, one can show that the incidence rate will be proportional to $I^q$ when one makes the natural assumption that the attack times in a given time period follow a Poisson distribution.
	Nonlinear incidence rates also arise naturally in models of soil-transmitted parasitic worm infections -- helminth infections -- when the probability of pairing is taken into account \cite{LLI86}.
	As showcased by \cite{LLI86}, SIR and SIRS models exhibit very different behavior if the incidence rate is not $F\big(S,I\big)I =\beta I S$, i.e., it is nonlinear.
	The dynamical behavior can include Hopf, saddle-node and homoclinic bifurcations. 
	
	Populations are influenced by random environmental fluctuations, and these fluctuations have a significant impact on growth dynamics.
	In order to have robust dynamic models, it is important to include these fluctuations in the model.
	Furthermore, environmental fluctuations can have a significant impact on whether a species survives in the long-term.
	For example, in certain settings, persistence can be reversed into extinction by environmental fluctuations, while in others, extinction can be transformed into persistence \cite{BL16, HN20, HNS21}.
	Common ways to model environmental fluctuations in population dynamics are systems of stochastic differential equations and Markov processes \cite{C82, ERSS13, EHS15,  LES03, SLS09, SBA11, BS09, BHS08, B18, HNC20}.
	A typical approach is to transform ordinary differential equations (ODE) models into stochastic differential equations (SDE) models.
	This boils down to saying that the various rates in the disease ecosystem are not constant, but fluctuate around their average values according to white noise.
	There is a well established general theory of coexistence and extinction for systems that come from ecology, when they are in Kolmogorov form \cite{SBA11,hening2018coexistence,HNC20}.
	
	In \cite{H02} the authors study the role of quarantine in a series of SIQR models. They show that `After examining many parameter sets and the corresponding periodic solutions, we have found that as one aspect becomes more realistic, another becomes less realistic. Thus we have not been able to find a parameter set that matches every feature of the observed data.' Nevertheless, they follow \cite{FT00} and say that stochastic models might be significantly more realistic.  
	
	It is very important to use stochastic models in epidemiological modelling, when possible. From the start it is natural to define the probability of disease transmission between two individuals instead of stating certainly that transmission will or will not happen. Furthermore, as the authors of \cite{AB12} explain, when one considers the extinction of endemic diseases, this phenomenon can only be analyzed using stochastic models since `extinction occurs when the epidemic process deviates from the expected level'. 
	
	In \cite{nguyen2020long} the authors consider a model which is perturbed by white noise and has the general incidence rate $S(t)I(t)/F\big(S(t), I(t)\big)$, where $F\big(S(t), I(t)\big)$ is a locally Lipschitz continuous function in both variables.
	They show that if the real-valued threshold $\lambda<0$, then the disease will eventually disappear, and if $\lambda>0$ the epidemic will become permanent.

	In certain systems, it makes more sense to assume that when the environment changes, the dynamics also changes significantly.
	In a deterministic setting this can be modelled by periodic vector fields which can be interpreted to mimic seasonal fluctuations.
	In the random setting, these types of fluctuations are captured by piecewise deterministic Markov processes (PDMP) -- see \cite{D84} for an introduction to PDMP.
	In a PDMP, the environment switches between a fixed finite number of states to each of which we associate an ODE.
	In each state the dynamics is given by the flow of its associated ODE.
	After a random time, the environment switches to a different state, and the dynamics is governed by the ODE from that state.
	
	Our goal is to capture both white noise and discrete types of fluctuations.
	As such we will look at SSDE (stochastic differential equations with switching).
	These processes involve a discrete component that keeps track of the environment and which changes at random times.
	In a fixed environmental state the system is modelled by stochastic differential equations.
	This way we can capture the more realistic behaviour of two types of environmental fluctuations: (1) large changes of the environment (daily or seasonal changes) and (2) small fluctuations within each environment.
	Random switching has been used in epidemiological settings \cite{lan2019stochastic, lin2014threshold, WL23}.
	In \cite{phu2020longtime} the authors considered the system
	\begin{equation}\label{ww3}
		\begin{cases}
			\begin{aligned}
				dS(t) &= \big[\Lambda\big(r(t)\big)-F\big(S(t), I(t),r(t)\big)I(t)-\mu\big(r(t)\big)S(t)+\gamma\big(r(t)\big)I(t)\big]dt\\
				&+\sigma_1\big(r(t)\big)S(t)dW_1(t),\\
				dI(t) &= I(t)\big[F\big(S(t), I(t),r(t)\big)-\wdt\gamma\big(r(t)\big)\big]dt\\
				&+\sigma_2\big(r(t)\big)I(t)dW_2(t),
			\end{aligned}
		\end{cases}
	\end{equation}
	
	\noindent
	where $\sigma_1, \sigma_2\neq 0$ parametrise the variance of the white noise, $\wdt\gamma = \mu_1 + \gamma$, where $\mu_1$ is a separate death rate for the infected, and $r(t)$ is a Markov chain taking values in a finite space.
	
	Our goal is to study \eqref{ww0} when there are both types of environmental fluctuations present.
	In its stochastic form, system \eqref{ww0} becomes
	\begin{equation}\label{ww1}
		\begin{cases}
			\begin{aligned}
				dS(t)&=\Big[A\big(r(t)\big)-F\big(S(t),I(t),r(t)\big)I(t)-\mu\big(r(t)\big)S(t)+\gamma_5\big(r(t)\big)R(t)\Big]dt\\
				&+\sigma_1\big(r(t)\big)S(t)dW_1(t),\\
				dI(t)&=\Big[F\big(S(t),I(t),r(t)\big)-\mu\big(r(t)\big)-\gamma_1\big(r(t)\big)-\gamma_2\big(r(t)\big)\Big]I(t)dt\\
				&+\sigma_2\big(r(t)\big)I(t)dW_2(t),\\
				dQ(t)&=\Big[\gamma_2\big(r(t)\big)I(t)-\big(\mu(r(t))+\gamma_3(r(t))+\gamma_4(r(t))\big)Q(t)\Big]dt\\
				&+\sigma_3\big(r(t)\big)Q(t)dW_3(t),\\
				dR(t)&=\Big[\gamma_3(r(t))Q(t)-\big(\mu\left(r(t)\right)+\gamma_5\left(r(t)\right)\big)R(t)\Big]dt\\
				&+\sigma_4\big(r(t)\big)R(t)dW_4(t),\\
			\end{aligned}
		\end{cases}
	\end{equation}
	
	\noindent
	where $\sigma_i\neq 0$ for $i=1,\dots,4$, and the other parameters are interpreted as for \eqref{ww0}.
	
	Here $\{r(t), t\geq 0\}$ is a right-continuous Markov chain on $\big(\Omega, \mathcal F, \{\mathcal F_t\}_{t\geq 0}, \mathbb P\big)$, taking values in the finite state space $\mathcal M:=\{1,\dots, m_0\}$.
	The Markov chain $\{r(t), t\geq 0\}$ is supposed to have an irreducible generator $\mathbf{Q}=(q_{kl})_{m_0\times m_0}$.
	This implies in particular that $\{r(t), t\geq 0\}$ has a unique invariant probability measure $\pi=(\pi_1,\dots,\pi_{m_0})$ which can be found by solving $\pi \mathbf{Q} = 0$ with $\sum_{k}\pi_k=1$. We will also assume that the Markov chain is independent of the Brownian motions so that:
	\begin{equation}\label{eq:tran}
		\begin{array}{ll}
			&\disp \PP\{ r(t+\Delta)=j | r(t)=i,
			r(u), s\leq t\}=q_{ij}\Delta + o(\Delta), \text{\quad \quad if } i\ne j\
			\hbox{, and }\\
			&\disp \PP\{ r(t+\Delta)=i \,| r(t)=i,
			r(u), s\leq t\}=1 + q_{ii}\Delta + o(\Delta).
		\end{array}
	\end{equation}
	Intuitively, the above equations tells us that if we know that at $t$ the environment is in state $i$, then the probability of jumping in a small time $\Delta$ to state $j$ is approximately $q_{ij}\Delta$.
	
	For the field of epidemiology, one of the fundamental questions is analyzing when a disease is endemic, i.e., persisting over long periods of time, and when it is epidemic, i.e., when it only lasts a short time.
	Our main goal is to provide sharp conditions which tell us when the disease is endemic and when it is epidemic.
	Furthermore, we are able to characterize endemic diseases by proving that the stochastic process converges to a stationary distribution exponentially fast.
	When the disease is an epidemic, we can prove that the disease disappears exponentially fast and can find the rate of extinction. 
	
	The paper is structured as follows.
	In Section \ref{sec:main}, we present the main results and derive the condition for extinction.
	In Section \ref{s:pers}, we prove when one has the persistence of the disease.
	In Section \ref{sec:disc}, we provide a discussion of the results and simulations.
	
	\section{Main results and extinction} \label{sec:main}
	Let $\big(\Omega, \mathcal F, \{\mathcal F_t\}_{t\geq 0}, \mathbb P\big)$ be a complete probability space with a filtration $\{\mathcal F_t\}_{t\geq 0}$, which satisfies the usual conditions.
	Define $c_1=\mu,$ $c_2=\mu+\gamma_1+\gamma_2$, $c_3=\mu+\gamma_3+\gamma_4$, $c_4=\mu+\gamma_5$, which all depend on $r(t)$, and set $j=(s,i,q,r,k)$, $j_0=(s,0,q,r,k)$, $J(t):=(S(t), I(t), Q(t), R(t), r(t))$. Throughout this paper, we use the lower case letters $s,i,q,r$ to represent the initial values of $S(t), I(t),Q(t), R(t)$ respectively. Denote $\R_+^4:=\{(s,i,q,r)\in \R^4: s\ge 0,i\ge 0,  q\ge 0, r\ge 0 \}$, $\R_+^{4,\circ}:=\{(s,i,q,r)\in \R^4: s> 0,i> 0,  q> 0, r> 0\}$, and $\R_+^{4,\star}:=\{(s,i,q,r)\in \R^4: s\ge 0,i> 0,  q\ge 0, r\ge 0\}$. Moreover, to simplify the notation, we let $\check{a}:=\max_{k\in \mathcal M}\{a(k)\}$, and $\hat{a}:=\min_{k\in \mathcal M}\{a(k)\}$.
	
	The following assumption is supposed to hold throughout the paper.
	\begin{asp}
		$F(s,i,k)$ is a locally Lipschitz function in $(s,i)\in\R^2_+$  and continuous in $i$ at $i=0$ uniformly in $s$, i.e.,
		\begin{equation}\label{e1-a1}
			\lim_{i\to0}\sup_{s\geq 0,k\in\M}\left\{|F(s,i,k)-F(s,0,k)|\right\}=0.
		\end{equation}
		Suppose further that $F(s,0,k)$ is non-decreasing in $s$ for each fixed $k\in\M$ and
		$F(s,0,k)\leq C_F(s+1),\,\forall\, (s,k)\in[0,\infty)\times\M$ for some constant $C_F>0$.
	\end{asp}
	\begin{rem}
		The functional forms $F(S, I) = \beta S/(1 + m_1 S)$ (Holling type II), $F(S, I) = \beta S$ (bilinear functional response), and $F(S, I) = \beta S/(1 + m_1 S + m_2 I)$ (Beddington-DeAngelis functional response) satisfy  Assumption \ref{e1-a1}. Moreover, we can also include functional responses of the form $IS/F(S,I)$, from \cite{nguyen2020long}, by replacing $S/F(S,I)$ with $F(S,I)$. As a result one can include, for example, functional responses of the form $IS/(S+I)$ and of the form $\beta IS/(1+pI+qI^2)$.
	\end{rem}
		\begin{thm}\label{thm:wp} The system \eqref{ww1} has the following properties:
		\begin{itemize}
			\item[{\rm(i)}]
			For any initial value $(s,i,q,r,k)\in\R^4_+\times\M$, there exists a unique global solution
			$\big(S(t), I(t), Q(t), R(t)\big)$ to \eqref{ww1}
			such that $\PP_j\left\{\big(S(t), I(t), Q(t), R(t)\big)\in\R^4_+,\  \forall t\geq0\right\}=1$, where $j$ denotes the initial condition.
			Moreover, $\PP_{j_0}\left\{I(t)=0,\  \forall t\geq0\right\}=1$ and
			\begin{equation}\label{ea-thm2.1}
				\PP_{j}\left\{\big(S(t), I(t), Q(t), R(t)\big)\in\R^{4,\circ}_+,\  \forall t>0\right\}=1,\text{ for any }j\in\R^{4,*}_+\times\M.
			\end{equation}
			In addition, the solution process $\big(S(t), I(t),Q(t),R(t), r(t)\big)$ is a Markov-Feller process with transition probability denoted by $P(t,s,i,q,k,\cdot)$.
			\item[{\rm(ii)}] For any $p>0$ sufficiently small, there exist $C_p>0$ and $D_p>0$ such that
			\begin{equation}\label{e1-thm2.1}
				\E_{j}\big(S(t)+I(t)+Q(t)+R(t)\big)^{1+p}\leq \dfrac{(1+s+i+q+r)^{1+p}}{e^{D_pt}}+\frac{C_p}{D_p}, \quad\forall t\geq 0.
			\end{equation}
			Moreover, for any $H,\eps, T>0$, there exists an $M_{H,\eps,T}>0$ such that
			\begin{equation}\label{e2-thm2.1}
				\PP_{j}\left\{\sup_{ t\in[0,T]}\big\{S(t)+I(t)+Q(t)+R(t)\big\}\leq M_{H,\eps, T}\right\}\geq 1-\eps, \quad \forall j\in[0,H]^4\times\M.
			\end{equation}
		\end{itemize}
	\end{thm}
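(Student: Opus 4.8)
The plan is to treat \eqref{ww1} as a switching diffusion whose drift and diffusion coefficients are, in each fixed environment $k\in\M$, locally Lipschitz on $\R^4_+$: the only nonlinearity is the incidence term $F(s,i,k)i$, which is locally Lipschitz in $(s,i)$ by our standing assumption on $F$, while every remaining term is affine. Standard existence--uniqueness theory for SDEs with Markovian switching then yields a unique solution defined up to an explosion time $\tau_e=\lim_n\tau_n$, where $\tau_n=\inf\{t\ge0:S(t)+I(t)+Q(t)+R(t)\ge n\}$. The substance of the proof then consists of (a) showing the solution remains in $\R^4_+$ with the stated boundary behaviour, and (b) producing a single Lyapunov inequality that simultaneously rules out explosion ($\tau_e=\infty$) and delivers the moment bounds.

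First I would establish positivity and the invariance claims by exploiting the multiplicative structure of the noise. The $I$-equation has $I$ as a common factor in both its drift and its diffusion, so $I(t)=i\exp\big(\int_0^t[F(S(s),I(s),r(s))-\mu-\gamma_1-\gamma_2-\tfrac12\sigma_2^2]\,ds+\int_0^t\sigma_2\,dW_2\big)$, which is identically zero when $i=0$, giving $\PP_{j_0}\{I(t)=0\ \forall t\}=1$, and strictly positive when $i>0$. For $Q$, $R$ and $S$ I would use the variation-of-constants representation with a positive integrating factor $\Phi(t)=\exp\big(\int_0^t[-c(s)-\tfrac12\sigma^2]\,ds+\int_0^t\sigma\,dW\big)$; since the respective forcing terms $\gamma_2 I$, $\gamma_3 Q$ and $A>0$ are nonnegative, and strictly positive once $I>0$, each coordinate stays in $[0,\infty)$ and becomes strictly positive for $t>0$. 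Chaining these implications---$I>0\Rightarrow Q>0\Rightarrow R>0$, and $A>0\Rightarrow S>0$---yields \eqref{ea-thm2.1}.

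The heart of the argument is the Lyapunov function $V(j)=(1+s+i+q+r)^{1+p}$. Summing the four drift terms, the incidence contribution $F(S,I)I$ cancels between the $S$- and $I$-equations, and the transfer terms $\gamma_2 I$, $\gamma_3 Q$, $\gamma_5 R$ telescope, leaving $d(S+I+Q+R)=[A-\mu S-(\mu+\gamma_1)I-(\mu+\gamma_4)Q-\mu R]\,dt+\sum_i\sigma_i(\cdot)\,dW_i$, whose drift is at most $\check A-\hat\mu U$ with $U=S+I+Q+R$. Applying the generator to $V$, the first-order term is bounded by $(1+p)(1+U)^p(\check A-\hat\mu U)$, while the diagonal second-order term is controlled by $\tfrac{p(1+p)}{2}(1+U)^{p-1}\max_i\check\sigma_i^2\,U^2$. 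For $p$ small enough that $\tfrac{p}{2}\max_i\check\sigma_i^2<\hat\mu$, the term $-\hat\mu(1+p)U(1+U)^p$ dominates the quadratic-variation term for large $U$, giving $\mathcal{L}V\le C_p-D_pV$ for suitable $C_p,D_p>0$. Dynkin's formula at $t\wedge\tau_n$ combined with this inequality yields $\E_j V(J(t\wedge\tau_n))\le V(j)e^{-D_pt}+C_p/D_p$; letting $n\to\infty$ forces $\tau_e=\infty$, which is the global solution in (i), and gives \eqref{e1-thm2.1} by Fatou's lemma.

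The uniform estimate \eqref{e2-thm2.1} would follow by writing $V(J(t))=V(j)+\int_0^t\mathcal{L}V\,ds+N_t$ with $N$ a local martingale; since $\mathcal{L}V\le C_p$, the process $V(J(t\wedge\tau_n))-C_p(t\wedge\tau_n)$ is a supermartingale, so Doob's maximal inequality (together with a Burkholder--Davis--Gundy bound on $N$) controls $\E_j\sup_{t\le T}V(J(t\wedge\tau_n))$ by a quantity depending only on $V(j)\le(1+4H)^{1+p}$ and $T$, and Markov's inequality then converts this into \eqref{e2-thm2.1}. The Markov--Feller property is obtained in the standard way: local Lipschitz continuity of the coefficients together with \eqref{e1-thm2.1} gives, via Gronwall applied to the difference of two solutions, continuous dependence on the initial data, while time-homogeneity and the independence of the chain from the Brownian motions give the Markov property. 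The main obstacle is the delicate balancing in the Lyapunov step: one must verify the cancellation of the non-Lipschitz incidence term in the total drift---so that no growth condition on $F$ is needed for non-explosion---and then choose $p$ small enough that the diffusion cannot overwhelm the mean-reversion, which is precisely what restricts \eqref{e1-thm2.1} to small $p$.
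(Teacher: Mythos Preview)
The paper does not actually supply a proof of this theorem: Theorem~\ref{thm:wp} is stated and then used without any argument in the text, the authors evidently regarding it as routine. Your outline is exactly the standard route for such results---local Lipschitz coefficients give a local solution up to an explosion time, the Lyapunov function $V=(1+S+I+Q+R)^{1+p}$ together with the cancellation of $F(S,I,r)I$ in the summed drift yields $\mathcal{L}V\le C_p-D_pV$ for small $p$, and this simultaneously forces $\tau_e=\infty$ and gives \eqref{e1-thm2.1}; the sup bound \eqref{e2-thm2.1} then follows from the resulting supermartingale and a maximal inequality. That is the argument one would expect and it is correct.

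One step in your sketch is too quick. For the positivity of $S$ you say the forcing in the variation-of-constants representation is ``$A>0$'', but the $S$-equation also carries $-F(S,I,r)I$, which is not part of the linear $(-\mu S)\,dt+\sigma_1 S\,dW_1$ block and can make the effective forcing negative. Under the standing assumption as written nothing excludes, e.g., $F\equiv 1$, and then $S$ can be driven to $0$ when $I$ is large. What one needs---and what every concrete $F$ in the paper satisfies---is $F(0,i,k)=0$, so that $F(s,i,k)/s$ is locally bounded; then $-[F(S,I,r)I/S]\,S$ can be absorbed into the exponential integrating factor on the stopped interval where $I$ is bounded, and your argument goes through. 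Alternatively, with that same condition a logarithmic barrier $-\ln S$ has generator bounded above on $\{I\le n\}$, which also closes the gap. This is a lacuna in the paper's hypotheses as much as in your proof, but you should flag it.
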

	
	Consider the dynamics of the population without disease by letting $I(t)=Q(t)=R(t)=0$ in the first equation of \eqref{ww1}. We obtain
	\begin{equation}\label{ww2}
		\begin{aligned}
			d\Se(t)=&\Big[A\big(r(t)\big)-\mu\big(r(t)\big)\Se(t)\Big]dt+\sigma_1\big(r(t)\big)\Se(t)dW_1(t)
		\end{aligned}
	\end{equation}
	
	Lemma \eqref{lm} will show that the process $\big(\Se(t),r(t)\big)$ has a unique invariant probability measure $\bnu_0$ supported on $(0,\infty)\times \mathcal M$. Moreover, the measure $\bnu_0$ can be regarded as the unique invariant measure on the boundary of $\big(S(t),I(t),Q(t),R(t),r(t)\big)$ by embedding $[0,\infty)\times \mathcal M$ to $[0,\infty)\times \{0\}^3\times\mathcal M$.
	Let 
	\begin{equation}\label{e:lambda}
		\lambda=\sum_{k\in\M}\int_{(0,\infty)}\left[F(s,0,k)-c_2(k)-\dfrac{\sigma_2^2(k)}{2}\right]\bnu_0\big(ds,\{k\}\big).
	\end{equation}
	
	\begin{rem}
		In particular, if there is no switching, one has
		\begin{equation}\label{e:lambda_nor}
			\int_{(0,\infty)}\left[F(s,0)-c_2-\dfrac{\sigma_2^2}{2}\right]\bnu_0(dx).
		\end{equation}
		The SDE 	
		\begin{equation}\label{ww2_nor}
			\begin{aligned}
				d\Se(t)=&\Big[A-\mu\Se(t)\Big]dt+\sigma_1\Se(t)dW_1(t)
			\end{aligned}
		\end{equation}
		has a unique stationary distribution which can be found explicitly (see Lemma 2.2 in \cite{HN23}) so that if $F(s,i)=\beta s$ then
		\[
		\lambda =\int_{(0,\infty)}\left[\beta s-c_2-\dfrac{\sigma_2^2}{2}\right]\bnu_0(dx) = \beta \frac{A}{\mu} -c_2 - \frac{\sigma_2^2}{2}.
		\]
		
	\end{rem}
	
	\begin{rem}
		If $F(s,i,k)=\beta(k) s$ and we set $\mathbf{Q} = (q_{kl})_{m_0\times m_0}$, $\mathbf{U} :=\diag (\mu(1),\dots,\mu(m_0))$, $\beta :=(\beta(1),\dots,\beta(m_0))^T$ then one can show \cite{WL23} that
		\[
		\lambda = \sum_{k=1}^{m_0} \pi_k\left(\kappa_kA(k) -c_2(k) - \frac{\sigma_2^2(k)}{2}\right).
		\]
		where $K=(\kappa_1,\dots,\kappa_{m_0})^T$ is the unique solution to
		\[
		(\mathbf{U} - \mathbf{Q})K=\beta.
		\]
		
		Specifically, if $m_0=2$ we get
		\[
		\lambda =  \pi_1\left(\kappa_1A(1) -c_2(1) - \frac{\sigma_2^2(1)}{2}\right)+\pi_2\left(\kappa_2A(2) -c_2(2) - \frac{\sigma_2^2(2)}{2}\right)
		\]
		where 
		\[
		\kappa_1 = \frac{(\mu(2)-q_{22})\beta(1)+q_{12}\beta(2)}{(\mu(1)-q_{11})(\mu(2)-q_{22})-q_{12}q_{21}}, 
		\]
		\[
		\kappa_2 = \frac{(\mu(1)-q_{1})\beta(2)+q_{21}\beta(1)}{(\mu(1)-q_{11})(\mu(2)-q_{22})-q_{12}q_{21}}, 
		\]
		and
		\[
		\pi_1 = \frac{q_{21}}{q_{12}+q_{21}}, \pi_2 = \frac{q_{12}}{q_{12}+q_{21}}.
		\]
	\end{rem}
	
	\begin{rem}
		We note that $\lambda$ from \eqref{e:lambda} would be unchanged if we would ignore the class $Q$ and would just look at the SIR model 
		\begin{equation}\label{e:SIR}
			\begin{cases}
				\begin{aligned}
					dS(t)&=\Big[A\big(r(t)\big)-F\big(S(t),I(t),r(t)\big)I(t)-\mu\big(r(t)\big)S(t)+\gamma_5\big(r(t)\big)R(t)\Big]dt\\
					&+\sigma_1\big(r(t)\big)S(t)dW_1(t),\\
					dI(t)&=\Big[F(S(t),I(t),r(t))-\mu(r(t))-\gamma_1(r(t))-\gamma_2(r(t))\Big]I(t)dt\\
					&+\sigma_2\big(r(t)\big)I(t)dW_2(t),\\
					dR(t)&=\Big[\gamma_2\big(r(t)\big)I(t)-\big(\mu\left(r(t)\right)+\gamma_5\left(r(t)\right)\big)R(t)\Big]dt\\
					&+\sigma_4\big(r(t)\big)R(t)dW_4(t).\\
				\end{aligned}
			\end{cases}
		\end{equation}
		This happens because $\lambda$ measures the `invasion rate' of the infected class $I$ into the susceptible class $S$ at stationarity. Note that \eqref{ww2} looks at the susceptible class $S$ in the absence of any of the other classes so that the stationary measure $\bnu_0$ only depends on $A$, $\mu$, and $\sigma_1$. Similarly, in \eqref{e:lambda} we only get dependence on the coefficients from the SDE governing the infected class $I$. Since we will show that the sign of $\lambda$ determines the persistence or extinction of the diseases, the only influence $Q$ can have will be on the distribution at stationarity of the infected individuals, if the disease persists. 
	\end{rem}	

	\subsection{Extinction.}
	In this subsection we will look at the case $\lambda<0$. 	
	Define the randomized occupation measures
	$$
	\wdt\Pi^t(\cdot):=\dfrac1t\int_0^t\1_{\left\{(S(u),I(u),Q(u),R(u),r(u))\in\cdot\right\}}du, \; t > 0.
	$$
	\begin{thm}\label{t:ext}
		If $\lambda<0$ then for any $j:=(s,i,q,r,k)\in\R^{4,\circ}_+\times\M$ we have
		\begin{equation}\label{e:ext}
			\begin{aligned}
				\PP_{j}\bigg\{\wdt\Pi^t \to \bnu_0~\text{weakly},  \lim_{t\to\infty}\dfrac{\ln I(t)}t=\lambda<0, \lim_{t\to \infty} \dfrac{\ln Q(t)}{t}<0, \lim_{t\to \infty} \dfrac{\ln R(t)}{t}<0\bigg\}=1.
			\end{aligned}
		\end{equation}	
	\end{thm}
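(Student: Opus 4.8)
The plan is to reduce everything to the almost-sure convergence of the occupation measures $\wdt\Pi^t$ to $\bnu_0$, from which the three logarithmic limits follow. First I would apply It\^o's formula to $\ln I(t)$ using the second equation of \eqref{ww1}. Since the drift of $I$ is $\big(F(S,I,r)-c_2(r)\big)I$, this gives the fundamental identity
\[
\frac{\ln I(t)}{t} = \frac{\ln I(0)}{t} + \frac1t\int_0^t\Big[F\big(S(u),I(u),r(u)\big)-c_2(r(u))-\tfrac{\sigma_2^2(r(u))}2\Big]\,du + \frac{1}{t}\int_0^t\sigma_2(r(u))\,dW_2(u).
\]
The stochastic integral is a local martingale whose quadratic variation grows at most like $\check{\sigma}_2^2\,t$, so by the strong law of large numbers for martingales its normalization tends to $0$ a.s. Hence the problem reduces to identifying the limit of the time average $\frac1t\int_0^t F\big(S(u),I(u),r(u)\big)\,du$; once I show it converges to $\sum_{k}\int F(s,0,k)\,\bnu_0(ds,\{k\})$, the identity yields $\lim_t \frac{\ln I(t)}t=\lambda$ directly, using that the $\M$-marginal of $\bnu_0$ is the stationary law $\pi$ of $r(t)$, so that $\sum_k\int(c_2(k)+\sigma_2^2(k)/2)\,\bnu_0(ds,\{k\})$ equals the ergodic average of $c_2+\sigma_2^2/2$.

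To identify that time average I would use the occupation-measure framework. Using the moment bound \eqref{e1-thm2.1} together with the Lyapunov function $V=(1+S+I+Q+R)^{1+p}$, which satisfies a drift inequality of the form $\op V\le C_p-D_pV$, one deduces that $\frac1t\int_0^tV\big(J(u)\big)\,du$ is almost surely bounded; since the sublevel sets of $V$ are compact in $\R^4_+\times\M$, this gives almost-sure tightness of $\{\wdt\Pi^t\}_{t\ge1}$. Along any convergent subsequence $\wdt\Pi^{t_n}\Rightarrow\mu$, a standard argument (apply It\^o to $f\big(J(t)\big)$ for $f$ in the domain of $\op$, divide by $t$, and let $t\to\infty$, so that the boundary term and the martingale both vanish) shows $\int\op f\,d\mu=0$ for all such $f$, i.e. every weak limit point $\mu$ is an invariant probability measure of \eqref{ww1}.

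The crux is then to prove that $\bnu_0$ is the only invariant probability measure when $\lambda<0$. I would argue via the ergodic decomposition: it suffices to rule out ergodic invariant measures $\mu$ charging the interior $\R^{4,\circ}_+\times\M$. For such a $\mu$, integrating the generator against $\mathrm{id}_S,\mathrm{id}_I,\mathrm{id}_Q,\mathrm{id}_R$ and against $\ln I$ gives the stationarity identities, in particular $\int[F(s,i,k)-c_2(k)-\sigma_2^2(k)/2]\,d\mu=0$; since the $\M$-marginal of $\mu$ is again $\pi$, this forces $\int F\,d\mu=\sum_k\pi_k\big(c_2(k)+\sigma_2^2(k)/2\big)$. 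On the other hand, the mass-balance identity for the excursion pool $I+Q+R$ shows $\int\gamma_5(k)R\,d\mu\le\int F(s,i,k)I\,d\mu$, so the $S$-stationarity identity yields $\int\mu(k)S\,d\mu\le\int A(k)\,d\mu=\int\mu(k)\Se\,d\bnu_0$; combined with the monotonicity of $F(\cdot,0,k)$ from the Assumption and a stochastic-domination comparison of the $S$-marginal of $\mu$ with $\bnu_0$, this should give $\int F\,d\mu\le\sum_k\int F(s,0,k)\,\bnu_0(ds,\{k\})$, whose right-hand side is strictly less than $\sum_k\pi_k\big(c_2(k)+\sigma_2^2(k)/2\big)$ precisely because $\lambda<0$ --- a contradiction. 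I expect this exclusion of interior invariant measures to be the main obstacle; in particular, making the comparison between the $S$-marginal of $\mu$ and $\bnu_0$ rigorous when $F(s,i,k)$ need not be monotone in $i$ is the delicate point, and it is exactly the step that uses both the sign of $\lambda$ and Assumption \eqref{e1-a1}.

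With uniqueness in hand, $\wdt\Pi^t\Rightarrow\bnu_0$ a.s.; since $\bnu_0$ is supported on $\{i=q=r=0\}$ while the $(1+p)$-moment bound supplies the uniform integrability needed to pass $F$ (which grows at most linearly in $s$) to the limit, the key time average converges and $\lim_t\frac{\ln I(t)}t=\lambda<0$, so $I(t)\to0$ exponentially fast. Finally, for $Q$ and $R$ I would solve their linear SDEs by variation of parameters: writing $Q(t)=\Phi_3(t)\big[Q(0)+\int_0^t\Phi_3(u)^{-1}\gamma_2(r(u))I(u)\,du\big]$ with $\Phi_3$ the exponential fundamental solution of the homogeneous $Q$-equation, the homogeneous rate averages to $-\sum_k\pi_k\big(c_3(k)+\sigma_3^2(k)/2\big)<0$ and the forcing $\gamma_2 I$ decays like $e^{\lambda t}$, so $\limsup_t\frac{\ln Q(t)}t<0$; the same representation with $\gamma_3 Q$ as forcing gives $\limsup_t\frac{\ln R(t)}t<0$. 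Assembling these statements yields \eqref{e:ext}.
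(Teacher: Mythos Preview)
Your overall architecture --- It\^o on $\ln I$, tightness of $\{\wdt\Pi^t\}$ from the Lyapunov bound, identification of weak limits as invariant measures, and then variation-of-parameters for $Q,R$ --- matches the paper's. The substantive divergence is in how you exclude interior invariant measures. The paper does this \emph{pathwise}: Lemma~\ref{lm1.2} shows that from any initial point with $(i,q,r)\in(0,\delta_1]^3$ the processes $I,Q,R$ stay below a fixed level $\epsilon_0$ and decay exponentially on an event of probability $\ge 1-\eps$, using the comparison $S(t)\le \Se^{(\epsilon_0)}(t)$ that is valid precisely because $R(t)\le\epsilon_0$ on that event. Transience in $\R^{4,\circ}_+$ then rules out interior invariant measures, and a hitting-time plus strong-Markov argument upgrades the local statement to all initial conditions.

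Your alternative --- integrate the generator against $\ln I$ to get $\int\big(F-c_2-\tfrac{\sigma_2^2}{2}\big)\,d\mu=0$, then derive a contradiction with $\lambda<0$ --- has a genuine gap at the comparison step. From the stationarity identities you correctly obtain $\int \mu(k)\,s\,d\mu\le \int A(k)\,d\mu=\int \mu(k)\,s\,d\bnu_0$, a comparison of \emph{$\mu(k)$-weighted means} of $S$. But this does not imply $\int F(s,0,k)\,d\mu\le \int F(s,0,k)\,d\bnu_0$: for that you would need stochastic domination of the joint $(S,k)$-law under $\mu$ by $\bnu_0$, and a mean inequality (with a different weight, no less) does not yield it. Even in the linear case $F(s,0,k)=\beta(k)s$ the conclusion fails unless $\beta(k)/\mu(k)$ is constant across $k$. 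Nor is pathwise domination $S(t)\le\Se(t)$ available a priori, because the $S$-equation in \eqref{ww1} carries the positive feedback term $\gamma_5(r(t))R(t)$; the paper's whole point in proving Lemma~\ref{lm1.2} is to control $R(t)\le\epsilon_0$ first, so that the comparison with the perturbed boundary process $\Se^{(\epsilon_0)}$ becomes legitimate. You also need $\int F(s,i,k)\,d\mu\le\int F(s,0,k)\,d\mu$, which the Assumption does not give (no monotonicity in $i$ is assumed). In short, the ``stochastic-domination comparison'' you flag as delicate is not merely delicate --- it is the missing idea, and the paper's pathwise lemma is what supplies it.
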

	
	We will make use of the following result from \cite{phu2020longtime}.
	
	\begin{lm}\label{lm} \textup{(}Lemma 3.1 in \cite{phu2020longtime}\textup{)}
		For any $\epsilon_0\in[0,1]$, there exists a unique invariant probability measure $\bmu_{\epsilon_0}$ to $(\Se^{\epsilon_0}(t),r(t))$, where $(\Se^{\epsilon_0}(t),r(t))$ is the solution of
		\begin{equation}\label{ww2'}
			d\bar S^{({\epsilon_0})}(t)=\Big[A(r(t))-\mu(r(t))\overline S^{(\epsilon_0)}(t)+\gamma_5(r(t))\delta_0\Big]dt+\sigma_1( r(t))\overline S^{(\epsilon_0)}(t)dW_1(t),
		\end{equation}
		In addition, any measure function $f(s,k)$ satisfying $f(s,k)\leq C_f(s+1)$ is $\bmu_{\epsilon_0}$-integrable.
		Moreover, if $\lambda<0$, there exists $\epsilon_0>0$ such that
		\begin{equation}\label{theta3}
			\wdt\lambda\leq\frac34\lambda,
		\end{equation}
		where
		$$\wdt\lambda:=\sum_{k\in\M}\int_{(0,\infty)}\left[F(s,0,k)-c_2(k)-\dfrac{\sigma_2^2(k)}{2}\right]\bnu_{\epsilon_0}(dx,\{k\}).$$
		and that
		\begin{equation}\label{theta2}
			|F(s,i,k)-F(s,0,k)|\leq \frac{|\lambda|}4,\,\text{ for all }\,0\leq i\leq\epsilon_0, s\geq 0,k\in\M.
		\end{equation}
		Finally, it is noted that when $\epsilon_0=0$, \eqref{ww2'} becomes \eqref{ww2}. As a consequence, the above conclusions hold for the solution of \eqref{ww2}.
	\end{lm}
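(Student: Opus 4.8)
The plan is to treat \eqref{ww2'} as a one-dimensional linear regime-switching diffusion and to argue in four stages: a Foster--Lyapunov drift estimate producing both existence of a stationary law and $\epsilon_0$-uniform moment bounds; irreducibility for uniqueness; a combined weak-convergence and uniform-integrability argument giving $\wdt\lambda\to\lambda$ as $\epsilon_0\to0$; and a direct appeal to \eqref{e1-a1} for \eqref{theta2}.

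For existence I would apply the generator $\mathcal L$ of $\big(\Se^{(\epsilon_0)}(t),r(t)\big)$ to $V(s)=(1+s)^{1+p}$ with $p>0$ small. The dominant term for large $s$ is proportional to $(1+p)s^{1+p}\big[-\mu(k)+\tfrac{p}{2}\sigma_1^2(k)\big]$, which is strictly negative once $p<2\hat\mu/\check\sigma_1^2$, so $\mathcal LV\le -D_pV+C_p$ for suitable constants. Since $\epsilon_0\in[0,1]$, the forcing $\gamma_5(k)\epsilon_0\le\check\gamma_5$ is bounded and $C_p$ can be chosen independent of $\epsilon_0$. This drift condition gives tightness, hence an invariant probability measure $\bnu_{\epsilon_0}$, together with the uniform bound $\sup_{\epsilon_0\in[0,1]}\sum_{k\in\M}\int_{(0,\infty)}(1+s)^{1+p}\,\bnu_{\epsilon_0}(ds,\{k\})<\infty$; in particular every $f$ with $f(s,k)\le C_f(s+1)$ is $\bnu_{\epsilon_0}$-integrable. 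For uniqueness I would note that $\sigma_1(k)\ne0$ makes the diffusion nondegenerate on $(0,\infty)$ and that $\mathbf Q$ is irreducible, so by the standard theory of regime-switching diffusions the process reaches every nonempty open subset of $(0,\infty)\times\M$; together with the Feller property this forces a single stationary law.

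The crux is the continuity of $\epsilon_0\mapsto\wdt\lambda$ at $0$. As $\epsilon_0\to0$ the coefficients of \eqref{ww2'} converge to those of \eqref{ww2}, and the uniform tightness of $\{\bnu_{\epsilon_0}\}$ from the drift estimate, together with the uniqueness above, forces $\bnu_{\epsilon_0}\to\bnu_0$ weakly. The main obstacle is that the integrand $g(s,k)=F(s,0,k)-c_2(k)-\tfrac{\sigma_2^2(k)}{2}$ grows only linearly in $s$, so weak convergence alone does not pass to the limit inside the integral. I would close this gap by uniform integrability: the uniform $(1+p)$-moment bound controls the tails $\int_{\{s>R\}}(1+s)\,\bnu_{\epsilon_0}(ds,\{k\})$ uniformly in $\epsilon_0$, and together with weak convergence this yields $\wdt\lambda=\sum_{k\in\M}\int_{(0,\infty)}g(s,k)\,\bnu_{\epsilon_0}(ds,\{k\})\to\sum_{k\in\M}\int_{(0,\infty)}g(s,k)\,\bnu_0(ds,\{k\})=\lambda$. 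Because $\lambda<0$ we have $\tfrac34\lambda>\lambda$, so there exists $\epsilon_1>0$ with $\wdt\lambda\le\tfrac34\lambda$ for all $0<\epsilon_0\le\epsilon_1$, establishing \eqref{theta3}.

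Finally, \eqref{theta2} is immediate from \eqref{e1-a1}: the uniform limit $\lim_{i\to0}\sup_{s\ge0,k\in\M}|F(s,i,k)-F(s,0,k)|=0$ supplies $\epsilon_2>0$ with $|F(s,i,k)-F(s,0,k)|\le|\lambda|/4$ for all $0\le i\le\epsilon_2$, $s\ge0$ and $k\in\M$. Taking $\epsilon_0=\min\{\epsilon_1,\epsilon_2\}$ secures both \eqref{theta3} and \eqref{theta2}, and setting $\epsilon_0=0$ reduces \eqref{ww2'} to \eqref{ww2}, so all conclusions transfer to the solution of \eqref{ww2}.
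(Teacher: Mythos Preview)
The paper does not supply its own proof of this lemma: it is quoted verbatim as Lemma~3.1 of \cite{phu2020longtime}, and the text proceeds immediately to the next result. There is therefore nothing in the present paper to compare your argument against.

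That said, your sketch is correct and is essentially the standard route one would expect the cited reference to take: a Foster--Lyapunov drift with $V(s)=(1+s)^{1+p}$ for existence and the uniform $(1+p)$-moment bound, nondegeneracy plus irreducibility of $\mathbf Q$ for uniqueness, weak convergence of $\bnu_{\epsilon_0}$ to $\bnu_0$ combined with uniform integrability from the moment bound to pass the linearly-growing integrand to the limit, and a direct appeal to \eqref{e1-a1} for \eqref{theta2}. The one step you leave implicit---that any weak subsequential limit of $\bnu_{\epsilon_0}$ is invariant for the $\epsilon_0=0$ dynamics---follows from the affine dependence of the drift on $\epsilon_0$ and the Feller property, so it poses no difficulty.
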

	The next result tells us that if $\lambda<0$ and we start with small densities of quarantined and recovered individuals then the number of infected, quarantined and recovered individuals tends asymptotically to zero with high probability. 
	\begin{lm}\label{lm1.2}
		For any $\eps > 0$ and  $H> 0$, there exists a   $\delta_1> 0$ such that for all $(s,i,q,r,k)\in [0,H]\times (0,\delta_1]^3\times\M$, we have
		\begin{align}
			\label{eq-lem31}
			\PP_{j}\bigg\{ \lim_{t\to\infty}\dfrac{\ln I(t)}t<0, &\lim_{t\to \infty} \dfrac{\ln Q(t)}{t}<0, \nonumber \lim_{t\to \infty} \dfrac{\ln R(t)}{t}<0 \bigg\} \geq 1-\eps.
		\end{align}
	\end{lm}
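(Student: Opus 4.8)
The plan is to leverage the negativity of the boundary invasion rate supplied by Lemma~\ref{lm} and to dominate the susceptible component $S$ by the auxiliary process $\Se^{\epsilon_0}$ of \eqref{ww2'}, whose ergodic behaviour is understood. I would fix the $\epsilon_0>0$ produced by Lemma~\ref{lm}, so that both \eqref{theta3} and \eqref{theta2} hold, and let $\delta_0$ be the constant entering \eqref{ww2'}; after shrinking $\epsilon_0$ one may assume $\delta_0\le\epsilon_0$. The key device is the exit time
\[
\tau:=\inf\big\{t\ge0:\ I(t)\ge\epsilon_0\ \text{or}\ Q(t)\ge\delta_0\ \text{or}\ R(t)\ge\delta_0\big\}.
\]
The lemma then reduces to the single estimate $\PP_{j}(\tau=\infty)\ge1-\eps$, uniformly over $(s,i,q,r,k)\in[0,H]\times(0,\delta_1]^3\times\M$ for $\delta_1$ small: on $\{\tau=\infty\}$ all three thresholds hold for all time, so the Lyapunov estimates below are valid globally and the three limits asserted in the lemma are negative. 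Positivity of the solution from Theorem~\ref{thm:wp} guarantees $\ln I,\ln Q,\ln R$ are well defined for the starting points considered.

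On $[0,\tau]$ we have $R(u)\le\delta_0$, so comparing the drift of $S$ with that of $\Se^{\epsilon_0}$ (both driven by the same $W_1$ and $r(\cdot)$, with $-F(S,I,r)I\le0$ only helping and $\gamma_5R\le\gamma_5\delta_0$), the stochastic comparison theorem gives $S(u)\le\Se^{\epsilon_0}(u)$ when $S(0)=\Se^{\epsilon_0}(0)$. Itô's formula for $\ln I$, the bound \eqref{theta2}, and the monotonicity of $F(\cdot,0,k)$ then yield, for $t\le\tau$,
\[
\ln I(t)\le\ln i+\int_0^t\Big[F\big(\Se^{\epsilon_0}(u),0,r(u)\big)-c_2(r(u))-\tfrac{\sigma_2^2(r(u))}{2}\Big]du+\tfrac{|\lambda|}{4}\,t+\int_0^t\sigma_2(r(u))\,dW_2(u).
\]
By the ergodic theorem for $(\Se^{\epsilon_0},r)$ with invariant measure $\bnu_{\epsilon_0}$ (Lemma~\ref{lm}, which also guarantees that $F(\cdot,0,\cdot)$ is $\bnu_{\epsilon_0}$-integrable) the time-average of the bracket converges to $\wdt\lambda\le\tfrac34\lambda$, while the martingale term is $o(t)$ by the strong law for martingales. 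Hence on $\{\tau=\infty\}$ one obtains $\limsup_{t\to\infty}t^{-1}\ln I(t)\le\tfrac34\lambda+\tfrac{|\lambda|}{4}=\tfrac12\lambda<0$.

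The main obstacle is the uniform escape estimate $\PP_{j}(\tau<\infty)\le\eps$, which couples the three thresholds and must be uniform over $s\in[0,H]$. For the $I$-threshold the displayed bound gives $\ln I(t)\le\ln i-\tfrac{|\lambda|}{2}t+M(t)+R_H(t)$ on $[0,\tau]$, where $M(t)=\int_0^t\sigma_2(r)\,dW_2$ satisfies $\langle M\rangle_t\le\check{\sigma}_2^{\,2}\,t$ and $R_H$ is the relaxation transient of $\Se^{\epsilon_0}$ started from $s\le H$, bounded in probability uniformly over $s\in[0,H]$; reaching $\ln\epsilon_0$ forces a large excursion of $M(t)-\tfrac{|\lambda|}{2}t$ of size $\ln(\epsilon_0/\delta_1)$ up to the transient, an event of probability $\exp(-c\,\ln(\epsilon_0/\delta_1))\to0$ as $\delta_1\downarrow0$ by an exponential martingale inequality. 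Crucially, the loop $I\to Q\to R\to S\to I$ does not obstruct this, because $\Se^{\epsilon_0}$ already absorbs the feedback $\gamma_5R$ as the fixed constant $\gamma_5\delta_0$, so it suffices to keep $R$ below $\delta_0$ rather than to track its decay when bounding $S$. For the $Q$- and $R$-thresholds I would use the variation-of-constants representation of the linear equations $dQ=(\gamma_2I-c_3Q)dt+\sigma_3Q\,dW_3$ and $dR=(\gamma_3Q-c_4R)dt+\sigma_4R\,dW_4$: their homogeneous fundamental solutions have negative ergodic exponents $\sum_k\pi_k\big(-c_3(k)-\tfrac{\sigma_3^2(k)}{2}\big)<0$ and $\sum_k\pi_k\big(-c_4(k)-\tfrac{\sigma_4^2(k)}{2}\big)<0$, while the forcings $\gamma_2I$ and $\gamma_3Q$ are controlled multiples of quantities starting at $\delta_1$; hence $Q$ and $R$ stay below $\delta_0$ and inherit negative Lyapunov exponents with probability close to $1$ for $\delta_1$ small. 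A union bound over the three escape events then yields $\PP_{j}(\tau=\infty)\ge1-\eps$, which completes the argument.
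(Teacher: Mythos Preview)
Your proposal is correct and follows essentially the same strategy as the paper: introduce the exit time from $\{I<\epsilon_0,\ Q,R\text{ small}\}$, dominate $S$ by $\Se^{\epsilon_0}$ before exit, feed this into It\^o's formula for $\ln I$ together with \eqref{theta2}--\eqref{theta3} and the ergodic theorem for $(\Se^{\epsilon_0},r)$, and control $Q,R$ via their variation-of-constants representations. The only real difference is in the mechanics of the escape estimate: the paper first fixes a time horizon $T$ (after which the ergodic and martingale averages are close to their limits on explicit events $\Omega_1,\Omega_2$), bounds everything crudely on $[0,T]$ via events $\Omega_3,\Omega_4$, and only then chooses $\delta_1$ small enough that the worst-case growth on $[0,T]$ still keeps $I,Q,R$ below $\epsilon_0$; you instead invoke an exponential-martingale inequality directly for the supremum of $M(t)-\tfrac{|\lambda|}{2}t$. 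Both routes work, and the paper's is simply a more explicit packaging of the same idea.
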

	\begin{proof}
		Denote $g(s,i,k):=F(s,i,k)-c_2(k)-\dfrac{\sigma_2^2(k)}2.$
		By the ergodicity of $\big(\bar S^{({\epsilon_0})}(t),r(t)\big)$,
		\begin{equation}
			\label{e-Omega0}
			\lim_{t\to\infty}\dfrac1t\int_0^tg\big(\bar S^{({\epsilon_0})}(u),0,r(u)\big)du=\sum_{k\in\M}\int_{(0,\infty)}g(s,0,k)\bmu_{\epsilon_0}(ds,k)=\wdt\lambda \text{ a.s.}
		\end{equation}
		Therefore, for any $\eps > 0$, there exists a $T_1=T_1(H,\eps) > 0$ such that $\PP_{H,i}(\Omega_{1}) \ge 1-\frac\eps4$, where
		\begin{equation}\label{omega1}\Omega_1=\left\{\omega\in\Omega: \dfrac1t\int_0^tg(\bar S_s^{({\epsilon_0})}(u),0, r(u))du\leq \wdt\lambda+\dfrac{|\lambda|}{4}\quad \text{ for all } t \ge T_1\right\}
		\end{equation}
		where the subscript in $\PP_{H,i}$ indicates the initial value of $(\bar S^{({\epsilon_0})}(u),r(u))$.
		Because of the uniqueness of solutions, we have $\Se^{(\epsilon_0)}_{s,k}(u)\leq\Se^{(\epsilon_0)}_{H,i}(u), \forall u\geq 0$ almost surely where the subscript of $\Se^{(\epsilon_0)}_{s,k}$ indicates the initial value $(\Se^{(\epsilon_0)}(0),r(0))$, which implies $\PP_{s,k}(\Omega_1)\geq 1-\frac\eps4$ for $(s,k)\in[0,H]\times\M$.
		
		Let  $\epsilon_1$ be a positive constant such that $0<\epsilon_1<\min\{\frac{-\lambda}{8}, \hat{c}_3+\frac{{\hat\sigma}_3^2}{2}, {\hat c}_4+\frac{{\hat\sigma_4}^2}{2}\}$.
		
		From the strong law of large numbers for Brownian motions we get
		\begin{equation}
			\label{e-slln-bm}
			\lim_{t\to\infty}\dfrac1t\int_0^t\sigma_k(r(u))dW_k(u)=0, \text{ a.s. for } k=1,2,3,4.
		\end{equation}
		
		This means there is a $T_2(\eps)>0$ such that  $\PP(\Omega_2)\geq 1- \frac\eps4, $ where  	\begin{equation}\label{omega2}\Omega_2=\left\{\omega\in\Omega: \dfrac1t\left|\int_0^t\sigma_k(r(u))dW_k(u)\right|\leq \min\left\{\dfrac{|\lambda|}{4},\epsilon_1\right\} \quad \text{ for all } t \ge T_2, k=1,2,3,4 \right\}.
		\end{equation}
		Let $T=\max\{T_1,T_2\}$. As a consequence of \eqref{e2-thm2.1} and the continuity of $F(s,i,k)$, there exists $M:=M(\epsilon, T, H)>0$ such that
		$
		\PP(\Omega_3)\geq 1-\frac\eps4, \text{ for all }(s,i,q,r,k)\in(0,H]^4\times\M,
		$
		where
		\begin{equation}\label{Omega4}
			\Omega_3=\left\{\omega\in\Omega: \int_0^TF(\Se^{(\epsilon_0)}(u),0,r(u))du\leq \frac{M}2\right\}.
		\end{equation}
		Moreover, we can choose $M$ large enough so that $\PP_{s,i,k}(\Omega_4)\geq 1-\frac\eps4$, where
		\begin{equation}\label{Omega3}
			\Omega_4=\left\{\omega\in\Omega: \left|\int_0^t\sigma_k(r(u))dW_k(u)\right|\leq \frac{M}2,
			\text{ for all } t\in[0,T] \text{ and } k=1,2,3,4\right\}.
		\end{equation}
		Let $\delta_1>0$ such that
                \begin{align*}
                    \delta_1\Bigg(&1+e^{M}+e^{M/2}+\check{\gamma}_2e^{2M}T+e^{M/2}  {\check\gamma}_3( e^{M/2}+ {\check\gamma}_2e^{2M}T) e^{M/2}T\\
                                  &+\dfrac{\check{\gamma}_2}{\hat{c}_3+\frac{\hat{\sigma}_3}{2}+\epsilon_1+\frac{\lambda}4}+\dfrac{{\check\gamma}_3\delta_1{\check\gamma}_2}{\left(\frac \lambda 4+{\hat c}_3+\frac{{\hat\sigma_4}^2}{2}+\epsilon_1\right)\left(\frac \lambda 4+{\hat c}_4+\frac{{\hat\sigma_4}^2}{2}+3\epsilon_1\right)}\Bigg)\\
                                  &<\epsilon_0.
                \end{align*}
	
		From the second equation of \eqref{ww1} we get
		\begin{equation}\label{ii1}			
	I(t)=I(0)\exp\left\{\int_0^t \left(F(S(u),I(u),r(u))-c_2(r(u))-\dfrac{\sigma_2^2(r(u))}2\right)du+\int_0^t\sigma_2(r(u))dW_2(u)\right\}.
		\end{equation}
		Using \eqref{Omega4} and \eqref{Omega3} we obtain
		\begin{equation}\label{delta_0}
			\begin{aligned}
				I(t)\leq&I(0)\exp\left\{\int_0^tF(S(u), I(u),r(u))du+\int_0^t\sigma_2(r(u))dW_2(u)\right\}\\
				\leq& \delta_1 e^{M}<\epsilon_0 \text{ for any } t\in[0,T], I(0)\leq\delta_1 \text{ and } \omega\in \Omega_3\cap \Omega_4.
			\end{aligned}
		\end{equation}
	Define $$\Phi_t:=\exp\{-\int_0^t \left(c_3(r(s))+\dfrac{{\sigma_3}^2(r(s))}{2}\right)ds+\int_0^t \sigma_3(r(s))dW_3(s)\}.$$
		The third equation of \eqref{ww1} implies that
	$$Q(t)=\Phi_t\left(Q_0+\int_0^t\gamma_2(r(s))I_s\Phi_s^{-1}ds\right)$$

		For $t\in [0, T]$ and $\omega\in \Omega_3\cap\Omega_4$,
		\begin{align*}
			e^{-\int_0^t \left(c_3(r(s))+\frac{{\sigma_3}^2(r(s))}{2}\right)dr-\frac M2}\le \Phi_t\le e^{-\int_0^t \left(c_3(r(s))+\frac{{\sigma_3}^2(r(s))}{2}\right)dr+\frac M2}.
		\end{align*}
	Hence, when $(i,q)\in (0, \delta_1]^2$,
\begin{equation}\label{q1}
		\begin{aligned}
			Q(t)&\le e^{M/2}(\delta_1+\int_0^t {\check\gamma}_2 \delta_1 e^M e^{M/2}ds)\\
			&\le e^{M/2}\delta_1+e^{M/2}  {\check\gamma}_2 \delta_1 e^M e^{M/2}T\\
			&< \epsilon_0.
		\end{aligned}
	\end{equation}
		Now, from the forth equation of \eqref{ww1} we obtain
			$$R(t)=\Psi_t\left(R_0+\int_0^t\gamma_3(r(s))Q_s\Psi_s^{-1}ds\right),$$
			where 
				$$\Psi_t:=\exp\{-\int_0^t \left(c_4(r(s))+\dfrac{{\sigma_4}^2(r(s))}{2}\right)ds+\int_0^t \sigma_4(r(s))dW_4(s)\}.$$
				Similarly, for $t\in [0, T]$ and $\omega\in \Omega_3\cap \Omega_4$ we obtain the following estimation for $\Psi_t$
					\begin{align*}
					e^{-\int_0^t \left(c_4(r(s))+\frac{{\sigma_4}^2(r(s))}{2}\right)dr-\frac M2}\le \Psi_t\le e^{-\int_0^t \left(c_4(r(s))+\frac{{\sigma_4}^2(r(s))}{2}\right)dr+\frac M2}.
				\end{align*}
		This implies that for $(i,q,r)\in (0, \delta_1]^3$,
		\begin{equation}\label{r1}
		\begin{aligned}
			R(t)
			&\le e^{M/2}\delta_1+e^{M/2}  {\check\gamma}_3( e^{M/2}\delta_1+e^{M/2}  {\check\gamma}_2 \delta_1 e^M e^{M/2}T) e^{M/2}T\\
			&< \epsilon_0.
		\end{aligned}
	\end{equation}

		Now, we define the stopping time
		\begin{equation}\label{wtau}
			\wtau: =\inf\left\{t\geq 0:  I(t)\vee Q(t)\vee R(t)\geq \epsilon_0\right\}.
		\end{equation}
		As a consequence, for $\omega\in\Omega_3\cap\Omega_4$ we have $\wtau>T$.
		
		Note that, one has
		$S(t)\leq\Se^{(\epsilon_0)}(t)$ for $t\in [T, \wtau]$ given that they have the same initial value.
                Then for almost every $\omega\in\bigcap_{j=1}^4\Omega_j$, and $i\in (0, \delta_1]$,

		\begin{equation}\label{tau1}
		\begin{aligned}
			I(t)&=	I(0)\exp\Big\{\int_0^t \left(F(S(u),0,r(u))-c_2(k)-\frac{\sigma_2^2(r(u))}2\right)du+\int_0^t\sigma_2(r(u))dW_2(u)\Big\}\\
			&\leq I(0)\exp\left\{\int_0^tg(\Se^{(\epsilon_0)}(u),0,r(u))du+\int_0^t\sigma_2(r(u))dW_2(u)\right\}\\
			&\leq I(0)\exp\left\{\wdt\lambda t+\frac{|\lambda| t}4+\frac{|\lambda| t}4\right\}\\
                        &\leq \delta_1\exp\left\{t\frac{\lambda}4\right\}<\epsilon_0, \text{ for all }t\in[T,\wtau).
		\end{aligned}
		\end{equation}
		Furthermore, 
		\begin{align*}
			Q(t)&=\Phi_t\left(Q_0+\int_0^t\gamma_2(r(s))I_s\Phi_s^{-1}ds\right)\\
			&=\Phi_t\left(Q_0+\int_0^T\gamma_2(r(s))I_s\Phi_s^{-1}ds\right)+\Phi_t \int_T^t \gamma_2(r(s))I_s\Phi_s^{-1}ds.
		\end{align*}
	Note that for all $\omega\in \cap_{j=1}^4 \Omega_j$,
	\begin{equation}\label{q2}
	\begin{aligned}
	 \Phi_t\int_T^t \gamma_2(r(s))I_s\Phi_s^{-1}ds&=\int_T^t  \gamma_2(r(s))I_s e^{-\int_s^t \left(c_3(r(u))+\frac{{\sigma_3}^2(r(u))}{2}\right)du+\int_s^t \sigma_3(r(u))dW_3(u)}ds\\
	 &\leq \int_T^t {\check\gamma}_2\delta_1 e^{\frac{s\lambda}4}e^{-\int_s^t \left({\hat c}_3+\frac{{\hat\sigma_3}^2}{2}\right)du+\epsilon_1(t+s)}ds\\
	 &=\int_T^t {\check\gamma}_2\delta_1 e^{\frac{s\lambda}4}e^{-\left({\hat c}_3+\frac{{\hat\sigma_3}^2}{2}\right)(t-s)+\epsilon_1(t+s)}ds\\
	 &={\check\gamma}_2\delta_1 e^{(-{\hat c}_3-\frac{{\hat\sigma_3}^2}{2}+\epsilon_1)t}\int_T^t e^{(\frac \lambda 4+{\hat c}_3+\frac{{\hat\sigma_3}^2}{2}+\epsilon_1)s}ds\\
	 &\le {\check\gamma}_2\delta_1 e^{(-{\hat c}_3-\frac{{\hat\sigma_3}^2}{2}+\epsilon_1)t} \dfrac{e^{(\frac \lambda 4+{\hat c}_3+\frac{{\hat\sigma_3}^2}{2}+\epsilon_1)t}}{\frac \lambda 4+{\hat c}_3+\frac{{\hat\sigma_3}^2}{2}+\epsilon_1}\\
	 &=\dfrac{{\check\gamma}_2\delta_1}{\frac \lambda 4+{\hat c}_3+\frac{{\hat\sigma_3}^2}{2}+\epsilon_1}e^{(\frac{\lambda}{4}+2\epsilon_1)t}\\
	 &\le \dfrac{{\check\gamma}_2\delta_1}{\frac \lambda 4+{\hat c}_3+\frac{{\hat\sigma_3}^2}{2}+\epsilon_1}.
	\end{aligned}
\end{equation}
Let $n_1>T$ be an positive integer. For $t\in [0, \wtau\wedge n_1]$, by \eqref{q1} and \eqref{q2} we obtain
\begin{equation}\label{tau2}
	Q_t\le \delta_1\left(e^{M/2}+e^{2M}  {\check\gamma}_2T+\dfrac{{\check\gamma}_2}{\frac \lambda 4+{\hat c}_3+\frac{{\hat\sigma_3}^2}{2}+\epsilon_1}\right)<\epsilon_0.
\end{equation}
Similarly,
$$R(t)=\Psi_t\left(R_0+\int_0^T\gamma_3(r(s))Q_s\Psi_s^{-1}ds\right)+\Psi_t\int_T^t \gamma_3(r(s))Q_s \Psi_s^{-1}ds.$$
For the last term, 
\begin{equation}\label{r2}
	\begin{aligned}
		\Psi_t\int_T^t \gamma_3(r(s))Q_s\Psi_s^{-1}ds&=\int_T^t  \gamma_3(r(s))Q_s e^{-\int_s^t \left(c_4(r(u))+\frac{{\sigma_4}^2(r(u))}{2}\right)du+\int_s^t \sigma_4(r(u))dW_4(u)}ds\\
		&\leq \int_T^t {\check\gamma}_3\delta_1 \dfrac{{\check\gamma}_2}{\frac \lambda 4+{\hat c}_3+\frac{{\hat\sigma_3}^2}{2}+\epsilon_1}e^{(\frac{\lambda}{4}+2\epsilon_1)s}e^{-\int_s^t \left({\hat c}_4+\frac{{\hat\sigma_4}^2}{2}\right)du+\epsilon_1(t+s)}ds\\
		&=\int_T^t {\check\gamma}_3\delta_1 \dfrac{{\check\gamma}_2}{\frac \lambda 4+{\hat c}_3+\frac{{\hat\sigma_3}^2}{2}+\epsilon_1}e^{(\frac{\lambda}{4}+2\epsilon_1)s}e^{-\left({\hat c}_4+\frac{{\hat\sigma_4}^2}{2}\right)(t-s)+\epsilon_1(t+s)}ds\\
		&={\check\gamma}_3\delta_1 \dfrac{{\check\gamma}_2}{\frac \lambda 4+{\hat c}_3+\frac{{\hat\sigma_3}^2}{2}+\epsilon_1}e^{(-{\hat c}_4-\frac{{\hat\sigma_4}^2}{2}+\epsilon_1)t}\int_T^t e^{(\frac \lambda 4+{\hat c}_4+\frac{{\hat\sigma_4}^2}{2}+3\epsilon_1)s}ds\\
		&\le {\check\gamma}_3\delta_1\dfrac{{\check\gamma}_2}{\frac \lambda 4+{\hat c}_3+\frac{{\hat\sigma_3}^2}{2}+\epsilon_1} e^{(-{\hat c}_4-\frac{{\hat\sigma_4}^2}{2}+\epsilon_1)t} \dfrac{e^{(\frac \lambda 4+{\hat c}_4+\frac{{\hat\sigma_4}^2}{2}+3\epsilon_1)t}}{\frac \lambda 4+{\hat c}_4+\frac{{\hat\sigma_4}^2}{2}+3\epsilon_1}\\
		&=\dfrac{{\check\gamma}_3\delta_1{\check\gamma}_2}{\left(\frac \lambda 4+{\hat c}_3+\frac{{\hat\sigma_4}^2}{2}+\epsilon_1\right)\left(\frac \lambda 4+{\hat c}_4+\frac{{\hat\sigma_4}^2}{2}+3\epsilon_1\right)}e^{(\frac{\lambda}{4}+4\epsilon_1)t}\\
		&\le \dfrac{{\check\gamma}_3\delta_1{\check\gamma}_2}{\left(\frac \lambda 4+{\hat c}_3+\frac{{\hat\sigma_4}^2}{2}+\epsilon_1\right)\left(\frac \lambda 4+{\hat c}_4+\frac{{\hat\sigma_4}^2}{2}+3\epsilon_1\right)}
	\end{aligned}
\end{equation}
Let $n_2>T$ be an positive integer, using \eqref{r1} and \eqref{r2} yields, 
\begin{equation}\label{tau3}
	R_t\le \delta_1\left(e^{M/2}+e^{M/2}{\check \gamma}_3(e^{M/2}+e^{2M}{\check\gamma}_2T)e^{M/2}T+\dfrac{{\check\gamma}_3\delta_1{\check\gamma}_2}{\left(\frac \lambda 4+{\hat c}_3+\frac{{\hat\sigma_4}^2}{2}+\epsilon_1\right)\left(\frac \lambda 4+{\hat c}_4+\frac{{\hat\sigma_4}^2}{2}+3\epsilon_1\right)}\right)<\epsilon_0,
\end{equation}
for $t\in [0, \wtau\wedge n_2]$.

By the definition of $\wtau$, and combining \eqref{tau1}, \eqref{tau2}, \eqref{tau3} we have $\wtau>\max\{n_1,n_2\}$. Note that $n_1, n_2$ are arbitrary, hence $\wtau=\infty$ for all $\omega\in\bigcap_{j=1}^4\Omega_j$, and $(I(0), Q(0), R(0))\in (0,\delta_1]^3$.

Therefore,

 $\lim_{t\to \infty} \dfrac{\ln I(t)}{t}\le \dfrac{\lambda}{4}<0,$ $\lim_{t\to \infty} \dfrac{\ln Q(t)}{t}\le \dfrac\lambda 4+2\epsilon_1<0$, and $\lim_{t\to \infty} \dfrac{\ln R(t)}{t}\le\dfrac\lambda 4+4\epsilon_1<0$.

The proof of the lemma is therefore complete by noting that $P(\cap_{j=1}^4 \Omega_j)\ge 1-\epsilon$.

	\end{proof}
	
	\begin{proof}[Proof of Theorem \ref{t:ext}]
		With probability 1, any weak-limit (if it exists) of $\wdt\Pi^t(\cdot)$ as $t\to\infty$
		is an invariant probability measure of the process $(S(t),I(t),Q(t),R(t),r(t))$ on $\R^4_+\times\M$.
		From the Lemma \ref{lm1.2} we see that the collection of  measures $\{\wdt\Pi^t(\cdot;\omega), t> 0, \omega\in \cap_{j=1}^{4} \Omega_{j}\}$ is tight in $\R^4_+\times\M$ and any weak limit of $\wdt\Pi^t(\cdot)$ as $t\to\infty$ must have support on $[0,\infty)\times\{0\}\times\{0\}\times\{0\}\times\M.$
		Furthermore, $\bmu_0$ -- {\it regarded as an invariant measure of $\big(S(t), I(t), Q(t), R(t), r(t)\big)$} -- 
		is the unique invariant probability measure on $[0,\infty)\times\{0\}\times\{0\}\times\{0\}\times\M$.
		Therefore,
		$\wdt\Pi^t(\cdot)$ converges weakly to $\bmu_0$ for almost every $\omega\in \cap_{j=1}^4\Omega_j$ as $t\to \infty$.
	This means that
		\begin{equation}
			\label{e-s/F(s,I)-conv}
			\begin{aligned}
				\lim_{t\to\infty}&\dfrac1t\int_0^t g(S(u),I(t),r(u))du=\sum_{k\in\M}\int_{(0,\infty)}g(s,0, k)\bmu_0(ds,k)=\lambda,
			\end{aligned}
		\end{equation}
		for almost every $\omega\in \bigcap_{j=1}^4\Omega_j$.
		
		In addition, note that
		$$\limsup_{t\to\infty}\dfrac1t\int_0^t (S(u))^{1+\wdt p} du\leq \lim_{t\to\infty}\dfrac1t\int_0^t (\Se^{(\delta_0)}(u))^{1+\wdt p}du=\sum_{k\in\M}\int_{(0,\infty)}s^{1+\wdt p}\bmu_{\delta_0}(ds,k)<\infty,$$	
		for some $\wdt p>0$. Therefore, in light of \cite[Lemma 5.6]{hening2018coexistence},  the limit \eqref{e-s/F(s,I)-conv} is valid.
		
		Using the definition of $g(s,i,k)$ and \eqref{ii1}, we obtain
		$$\dfrac{\ln I(t)}t=\dfrac{\ln I(0)}t+\dfrac1t\int_0^t g(S(u),I(t),r(u))du+\dfrac1t\int_0^t\sigma_2(r(u))dW_2(u).$$
		By letting $t\to\infty$ and using  \eqref{e-slln-bm} and \eqref{e-s/F(s,I)-conv}, we obtain that for almost every $\omega\in \bigcap_{j=1}^4\Omega_j$,
		$
		\lim_{t\to\infty}\frac{\ln I(t)}t=\lambda$.
		
		In view of Lemma \ref{lm1.2},
		the process $(S(t), I(t), Q(t), R(t))$ is transient in $\R^{4,\circ}_+$.
		As a result the process has no invariant probability measure in $\R^{4,\circ}_+$ and $\bmu_0$ is the unique invariant probability measure of $(S(t), I(t) , Q(t), R(t), r(t))$   in $\R^4_+\times\M$.
		Let $H$ be sufficiently large so that $\bmu_0(\{s\in(0,H)\})>1-\frac{\eps}2$.
		Due to \eqref{e1-thm2.1}, the process  $(S(t), I(t), Q(t), R(t), r(t))$ is tight.
		This implies that the occupation measure
		$$\Pi^t_{j}(\cdot):=\dfrac1t\int_0^t\PP_{j}\left\{(S(u),I(u),Q(u),R(u),r(u))\in\cdot\right\}du$$
		is tight in $\R^4_+\times\M$.
		By using the fact that any weak-limit of $\Pi^t_{j}$ as $t\to\infty$ must be an invariant probability measure of $(S(t),I(t),Q(t), R(t),r(t))$,
		we have that
		$\Pi^t_{j}$ converges weakly to $\bmu_0$ as $t\to\infty$.
		Thus, for any $\delta>0$,
		$$
		\liminf_{t\to\infty}\Pi^t_{s,i,k}((0,H)\times(0,\delta)\times (0,\delta)\times (0,\delta)\times\M)\geq\bmu_0((0,H)\times(0,\delta)\times (0,\delta)\times (0,\delta)\times\M)>1-\frac{\eps}2.
		$$
		Thus, there exists a  positive $\hat T$ such that
		$$\Pi^{\hat T}_{j}((0,H)\times(0,\delta)\times (0,\delta)\times (0,\delta)\times\M)>1-\eps,$$
		i.e.,
                $$\frac{1}{\hat T}\int_0^{\hat T}\PP_{s,i,k}\{(S(t),I(t),Q(t), R(t))\in (0,H)\times(0,\delta)^3\}dt>1-\eps.$$
		Let $\hat\tau=\inf\{t\geq 0: (S(t),I(t), Q(t), R(t))\in (0,H)\times(0,\delta)^3\}$, then
		$$\PP_{j}\{\hat\tau\leq\hat T\}>1-\eps.$$
	
		By the strong Markov property and Lemma \ref{lm1.2}, we have that
		\begin{align*}
			\PP_{j}\bigg\{\lim_{t\to\infty}\dfrac{\ln I(t)}t=\lambda<0, \lim_{t\to \infty} \dfrac{\ln Q(t)}{t}<0, \lim_{t\to \infty} \dfrac{\ln R(t)}{t}<0\bigg\}\geq 1-\eps.
		\end{align*}
		for any $(s,i,q,r,k)\in\R^{4,*}_+\times\M$. Since $\eps >0$ is arbitrary, \eqref{e:ext} follows. This finishes the proof.
	\end{proof}

	\section{Persistence}\label{s:pers} 
	
	In this section we look at the setting when $\lambda>0$. The following lemma will be relevant in the proof of the main result.
	\begin{lm}\textup{(}\cite{hening2018coexistence}, Lemma 3.5\textup{)}
			Let $Y$ be a random variable, $\theta_0>0$ a constant and suppose
			$$\mathbb E\exp(\theta_0 Y)+\mathbb E\exp(-\theta_0Y)\le K_1.$$
			Then the log-Laplace transform $\phi(\theta)=\ln \mathbb E\exp(\theta Y)$ is twice differentiable on $\left[0, \frac{\theta_0}{2}\right)$ and
			$$\dfrac{d\phi}{d\theta}(0)=\mathbb EY,\quad 0\le \dfrac{d^2\phi}{d\theta^2}(0)\le K_2,\quad \theta\in \left[0, \dfrac{\theta_0}2\right),$$
			for some $K_2>0$ depending only on $K_1$.
			
	\end{lm}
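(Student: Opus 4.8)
The plan is to pass through the moment generating function $M(\theta):=\E\exp(\theta Y)$ and then take its logarithm, since $\phi=\ln M$. First I would establish that $M$ is finite and strictly positive on a full neighborhood of $0$. For $|\theta|\le\theta_0$ one has $\theta Y\le\theta_0|Y|=\max\{\theta_0 Y,-\theta_0 Y\}$, so $\exp(\theta Y)\le\exp(\theta_0 Y)+\exp(-\theta_0 Y)$ and hence $M(\theta)\le K_1<\infty$ for all such $\theta$. Being an expectation of a strictly positive random variable, $M(\theta)>0$ everywhere it is finite, so in particular $\phi=\ln M$ is well defined on $[0,\theta_0/2)$.

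The key technical step is to differentiate under the expectation. For this I would produce a single integrable dominating function for the formal derivatives $Y\exp(\theta Y)$ and $Y^2\exp(\theta Y)$, valid uniformly for $|\theta|\le\theta_0/2$. Setting $C_k:=\sup_{x\ge0}x^{k}\exp(-\tfrac{\theta_0}{2}x)<\infty$ for $k=1,2$, one has, for every $\theta$ with $|\theta|\le\theta_0/2$,
\[
|Y|^{k}\exp(\theta Y)\le|Y|^{k}\exp\!\big(\tfrac{\theta_0}{2}|Y|\big)\le C_k\exp(\theta_0|Y|)\le C_k\big(\exp(\theta_0 Y)+\exp(-\theta_0 Y)\big),
\]
whose expectation is at most $C_kK_1$. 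Because this bound is uniform over the closed interval $|\theta|\le\theta_0/2$, it dominates the difference quotients on a two-sided neighborhood of every point of $[0,\theta_0/2)$, so by the standard theorem on differentiation under the integral sign (applying dominated convergence to the increments) $M$ is twice continuously differentiable on $[0,\theta_0/2)$ with $M'(\theta)=\E[Y\exp(\theta Y)]$ and $M''(\theta)=\E[Y^{2}\exp(\theta Y)]$.

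Finally I would transfer these facts to $\phi=\ln M$ by the chain rule: since $M>0$ is $C^{2}$, so is $\phi$, with $\phi'=M'/M$ and $\phi''=(M''M-(M')^{2})/M^{2}$. Evaluating at $\theta=0$, where $M(0)=\E 1=1$, gives $\phi'(0)=M'(0)=\E Y$, which is the first assertion, and $\phi''(0)=M''(0)-M'(0)^{2}=\E Y^{2}-(\E Y)^{2}=\mathrm{Var}(Y)$. Nonnegativity of the variance yields $\phi''(0)\ge0$ (equivalently, $\phi$ is the cumulant generating function, hence convex, so $\phi''\ge0$ throughout $[0,\theta_0/2)$). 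For the upper bound, the case $k=2$ of the displayed domination gives $\E Y^{2}=M''(0)\le C_2K_1$, so $\phi''(0)\le\E Y^{2}\le C_2K_1=:K_2$, a constant depending only on $K_1$ (with $\theta_0$ regarded as fixed).

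The main obstacle is the justification of differentiating under the expectation; once the $\theta$-uniform dominating function $C_k\big(\exp(\theta_0 Y)+\exp(-\theta_0 Y)\big)$ is secured, everything else is the chain rule together with the identification of $\phi''(0)$ with $\mathrm{Var}(Y)$. The one point to watch is that the neighborhoods used for the difference quotients near $\theta=\theta_0/2$ stay within the region where the bound holds; this is why I take the exponent $\theta_0/2$ (rather than something approaching $\theta_0$) in $C_k$, which makes the domination valid on all of $|\theta|\le\theta_0/2$ at once and removes any need for localization.
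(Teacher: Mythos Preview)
The paper does not supply its own proof of this lemma; it is quoted verbatim from \cite{hening2018coexistence} and used as a black box in the proof of Theorem~\ref{t:pers}. So there is nothing in the paper to compare your argument against.

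Your argument is correct for the statement as literally printed, and the domination step $|Y|^k e^{\theta Y}\le C_k\big(e^{\theta_0 Y}+e^{-\theta_0 Y}\big)$ is exactly the right device. One point worth flagging: the displayed conclusion ``$0\le \phi''(0)\le K_2,\ \theta\in[0,\theta_0/2)$'' is almost certainly a typo for $0\le\phi''(\theta)\le K_2$ on the whole interval --- this is how the lemma is actually invoked in the proof of Theorem~\ref{t:pers}, where the bound is applied at arbitrary $\rho_2\in(0,\rho_1)$. Your write-up establishes the lower bound $\phi''\ge0$ for all $\theta$ (via convexity of the cumulant generating function) but gives the upper bound only at $\theta=0$. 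The fix is short: your domination already yields $M''(\theta)\le C_2K_1$ uniformly on $|\theta|\le\theta_0/2$, and Cauchy--Schwarz gives $1=(\E e^{\theta Y/2}e^{-\theta Y/2})^2\le M(\theta)M(-\theta)\le K_1 M(\theta)$, so $M(\theta)\ge K_1^{-1}$. Hence $\phi''(\theta)=\dfrac{M''(\theta)}{M(\theta)}-\Big(\dfrac{M'(\theta)}{M(\theta)}\Big)^2\le \dfrac{M''(\theta)}{M(\theta)}\le C_2K_1^2=:K_2$ for every $\theta\in[0,\theta_0/2)$, with $K_2$ depending only on $K_1$ (and the fixed $\theta_0$).
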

	
	\begin{thm}\label{t:pers}
		If $\lambda>0$ then there exists a unique invariant measure $\bmu^*$ of $(S(t), I(t), Q(t), R(t), r(t))$ on $\R^{4,\circ}_+\times\M$. Moreover, the rate of convergence is exponential: there exists $\rho>0$ such that for any $j:=(s,i,q,r,k)\in\R^{4,\circ}_+\times\M$ we have
		$$
		\lim_{t\to\infty} e^{\rho t}\|P(t,j, \cdot)-\bmu^*(\cdot)\|=0
		$$
		where $\|\cdot\| $ is the total variation norm.
	\end{thm}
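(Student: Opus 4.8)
The plan is to follow the now-standard route for persistence of stochastic Kolmogorov and switching-diffusion systems (see \cite{hening2018coexistence, phu2020longtime}): construct a Foster--Lyapunov function adapted to the sign $\lambda>0$, use it to produce an invariant probability measure on the interior, and then upgrade ergodicity to a geometric rate using the nondegeneracy of the noise. The role of $\lambda>0$ is exactly dual to its role in Theorem \ref{t:ext}: here the infected class invades rather than decays, so the extinction face $\{i=0\}$ is repelling and the dynamics is pushed into $\R^{4,\circ}_+\times\M$.

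\textbf{The Lyapunov function.} The core of the argument is to build $V:\R^{4,\circ}_+\times\M\to[1,\infty)$ that tends to $+\infty$ both as $\|(s,i,q,r)\|\to\infty$ and as the state approaches the boundary, and that satisfies a drift inequality
\[
\mathcal{L}V \le -c\,V + d\,\1_K
\]
for constants $c,d>0$ and a compact set $K\subset\R^{4,\circ}_+\times\M$, where $\mathcal{L}$ is the generator of \eqref{ww1}. For the behaviour at infinity I would use a factor $(1+s+i+q+r)^{p}$ with $p>0$ small, estimated exactly as in the proof of \eqref{e1-thm2.1}. For the repulsion from the face $\{i=0\}$ I would use a factor $I^{-\theta}e^{\theta\phi(s,k)}$ with $\theta>0$ small, where $\phi$ solves the Poisson equation $\mathcal{L}_0\phi(s,k)=g(s,0,k)-\lambda$ for the generator $\mathcal{L}_0$ of the boundary process $(\Se,r)$ of \eqref{ww2}; this is solvable because \eqref{e:lambda} forces the right-hand side to integrate to zero against $\bnu_0$. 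A direct computation then gives $\mathcal{L}\big(I^{-\theta}e^{\theta\phi}\big)=-\theta\lambda\,I^{-\theta}e^{\theta\phi}\big(1+o(1)\big)<0$ for $\theta$ and $I$ small, precisely because $\lambda>0$. Since $Q$ is driven by $\gamma_2 I$ and $R$ by $\gamma_3 Q$, analogous correction factors controlling $Q^{-\theta}$ and $R^{-\theta}$ handle the faces $\{q=0\}$ and $\{r=0\}$: once $I$ is bounded below these classes cannot collapse. Assembling these pieces into a single $V$ and choosing $p,\theta$ small enough that all cross terms are dominated yields the drift inequality; it in particular produces an invariant probability measure $\bmu^*$ supported in the interior.

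\textbf{Uniqueness and the geometric rate.} In the interior every diffusion coefficient is nonzero, so in each fixed environment $k$ the diffusion is nondegenerate and has a smooth, everywhere positive transition density; together with the irreducibility of $\mathbf{Q}$, the switching diffusion $\big(S,I,Q,R,r\big)$ is irreducible and strong Feller on $\R^{4,\circ}_+\times\M$. Standard arguments then give uniqueness of $\bmu^*$ and make every compact subset of the interior a petite set. Combining the petiteness of compacts with the drift inequality above, the Meyn--Tweedie/Foster--Lyapunov criterion for geometric ergodicity yields $\rho>0$ and $C(j)<\infty$ with $\|P(t,j,\cdot)-\bmu^*(\cdot)\|\le C(j)e^{-\rho t}$, which is the claimed conclusion.

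\textbf{Main obstacle.} The difficulty is concentrated in the Lyapunov construction: one must solve the Poisson equation for $\mathcal{L}_0$ to absorb the switching fluctuations of $g$ around its mean $\lambda$, control the linear-in-$s$ growth of $g$ coming only from the mild bound $F(s,0,k)\le C_F(s+1)$ so that the boundary factor stays dominated by the factor at infinity, and simultaneously reconcile the three coupled boundary faces $\{i=0\}$, $\{q=0\}$, $\{r=0\}$ within a single $V$ for which all cross terms have the right sign. Verifying the strong Feller and irreducibility inputs for this switching diffusion, which is degenerate on the boundary but nondegenerate in the interior, is the other point requiring care.
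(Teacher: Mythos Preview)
Your strategy and the paper's share the same skeleton---a negative-power-of-$I$ Lyapunov function, repulsion from the extinction face when $\lambda>0$, and a geometric ergodicity criterion---but the technical implementation is genuinely different. You propose to solve the Poisson equation $\mathcal{L}_0\phi(s,k)=g(s,0,k)-\lambda$ for the full boundary diffusion $(\Se,r)$ and use the correction $e^{\theta\phi(s,k)}$, then obtain a continuous-time drift $\mathcal{L}V\le -cV+d\1_K$ and invoke Meyn--Tweedie. The paper instead solves only the \emph{finite} Poisson equation for the chain $r$ (Fredholm alternative: $\sum_j q_{kj}\gamma_j=\lambda'-h_k$), giving a correction $(1-\rho_1\gamma_k)$ depending on $k$ alone; this works only on the region $\{s\ge s_0,\ i\le\epsilon_1\}$ where $g$ is bounded below by $h_k$. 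To cover $\{s\le s_0\}$ the paper drops the infinitesimal viewpoint and uses a discrete-time log-Laplace estimate (Lemma 3.5 of \cite{hening2018coexistence}) to get $\E_j I^{-\rho_2}(t)\le e^{-\rho_2\lambda t/2}i^{-\rho_2}$; the two regions are then glued via stopping times and Dynkin's formula to produce a sampled-chain contraction $\E_j V_3^{\rho_2/\rho_1}(J(n_\star T_1))\le \kappa\,V_3^{\rho_2/\rho_1}(j)$, after which \cite{tuominen1994subgeometric} is invoked rather than Meyn--Tweedie.

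What each buys: your route is conceptually cleaner and, if the Poisson equation for $\mathcal{L}_0$ has a solution with controlled growth, yields a single global Lyapunov function. But that ``if'' is exactly the delicate point you flag---$g(s,0,k)$ grows linearly and only the mild bound $F(s,0,k)\le C_F(s+1)$ is available, so $\phi$ may grow and $e^{\theta\phi}$ may compete with the factor at infinity. The paper's region-splitting plus discrete-time argument avoids the full Poisson equation entirely, at the cost of a more intricate stopping-time bookkeeping. Separately, your explicit treatment of the faces $\{q=0\}$ and $\{r=0\}$ is more complete than the paper's sketch, which works only with $V_3(j)=(1-\rho_1\gamma_k)i^{-\rho_1}$ and does not display the analogous control of $Q^{-\theta}$, $R^{-\theta}$.
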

	\begin{proof}
		The existence of the invariant measure $\bmu^*$ follows by arguments similar to those from the proof of Theorem 2.2 from \cite{nguyen2020general}. As such, we only provide the details of the proof that the rate of convergence is exponential. 
		
		Let $g(s,i,k):=F(s,i,k)-\mu(k)-\nu_1(k)-\nu_2(k)-\dfrac{\sigma_2^2(k)}2.$
		
		Denote by $\pi_r$ the invariant measure of $r(t)$. Since $g(s,0,k)$ is increasing and
			$$
			\lambda=\sum_{k\in\M}\int_{(0,\infty)}\left[g(s,0,k)\right]\bnu_0(ds,\{k\})>0,
			$$
			we obtain
			$$\sum_{k\in \mathcal M} \left(g(s,0,k)\right)\pi_{r}(k)>0.$$
		
		
		
		Thus there exists $s_0>0$ such that
		$$\sum_{k\in \mathcal M} \left(\inf_{s\geq s_0} g(s,0,k)\right)\pi_{r}(k)>0.$$
		Moreover, there exists $\epsilon_1>0$ such that
		$$\lambda'=\sum_{k\in \mathcal M} h_k \pi_{r}(k)>0,$$ where $h_k:=\inf_{(s,i)\in [s_0,\infty)\times [0,\epsilon_1]} g(s,0,k)$.
		
		By Assumption \eqref{e1-a1} and an application of the Fredholm alternative, there exists $\gamma_k>0$ such that
		$$\sum_{j\in \mathcal M} q_{kj}\gamma_j=\lambda'-h_k, \quad \text{for any} \quad k\in \mathcal M.$$
		Now choose $\rho_1$ small enough such that
		$$\rho_1\gamma_k(-\lambda'+h_k)<\lambda'(1-\rho_1\gamma_k), \quad \rho_1\sigma_2^2(k)<\lambda' \quad \text{ and } \rho_1\gamma_k<1,$$
		for any $k\in \mathcal M.$
		
		Define $V_3(j)=(1-\rho_1\gamma_k)i^{-\rho_1}$ and note that
		\begin{equation}\label{ito1}
			\begin{aligned}
				\Lom V_3(j)&=\rho_1 V_3(j)\left(-F(s,i,k)+c_2(k)+\dfrac{\rho_1+1}2 \sigma_2^2(k)\right)\\&+\rho_1(-\lambda'+h_k)V_3(j)+\rho_1^2 \gamma_k(-\lambda'+h_k)i^{-\rho_1}\\&\leq -\dfrac{\rho_1 \lambda' V_3(j)}4,
			\end{aligned}
		\end{equation}
		for any $k\in \mathcal M$, $s\geq s_0$ and $i\leq \epsilon_1.$
		
		Now, since $$[\Lom \frac 1i](j)\leq \dfrac{c_2(k)+\sigma_2^2(k)}i\leq \dfrac{\check c_2+\check \sigma_2^2}i$$ we obtain
		$$\mathbb E_{j}I^{-1}(t)\leq \exp\{(\check c_2+\check\sigma_2^2)t\} i^{-1},$$
		for any $t\ge 0, j\in \mathbb R_+^{4,\circ}\times \mathcal M$.
		
		Since $\lambda=\sum_{k\in\M}\int_{(0,\infty)}\left[F(s,0,k)-c_2(k)-\dfrac{\sigma_2^2(k)}{2}\right]\bnu_0(ds,\{k\})$, we see that there exists $T_1>1$ satisfying 
		\begin{equation}\label{est1}
			-\ln (1-\rho_1\gamma_k)<\dfrac{\rho_1 \lambda T_1}{4}, \quad \forall k\in \mathcal M,
		\end{equation}such that
		$$\dfrac 1T \mathbb E_{j_0}\int_0^T \left(F(s,i,k)-c_2(k)-\dfrac{\sigma_2^2(k)}{2}\right)\,dt>\dfrac {3\lambda}4,$$
		for any $k\in \mathcal M$, $T\geq T_1,$ $s\leq s_0$. Choose $n_{\star}\in \Z_+$ such that $$n_\star>\dfrac{4(\check c_2+\check \sigma_2^2)}{\lambda'}+1.$$
 From Ito's formula,
			$$\ln I(t)=\ln I(0)+G(t),$$
			for some $t\in [T_1,n_{\star}T_1]$. 		
			Using Lemma 3.1 from \cite{hening2018coexistence} and applying the log-Laplace transform for $G(t)$ one can see that there exists $\hat K\ge 0$ such that $$0\le \dfrac{d^2 \hat \phi_{t}}{d\rho_2^2}(\rho_2)\le \hat K\quad \text{ for all } \rho_2\in (0, \rho_1), t\in [T_1, n_{\star}T_1],$$
			where $\hat \phi_{t}(\rho_2)=\ln \mathbb E_j \exp(-\rho_2 G(t))$.

			Similarly to the proof of  Proposition 4.1 from \cite{hening2018coexistence} one can show that there exist $\rho_2\in (0,\rho_1)$ and $\epsilon_2\in (0,\epsilon_1)$ such that
			\begin{equation}\label{est2}
				\mathbb E_{j}I^{-\rho_2}(t)\leq e^{\frac{-\rho_2 \lambda t}2} i^{-\rho_2}, 
			\end{equation}
			for any $k\in \mathcal M$, $t\in [T_1,n_{\star}T_1]$, $s\leq s_0$ and $i<\epsilon_2$.

			%

		Therefore, by \eqref{est1} and \eqref{est2} we have the following estimation
		\begin{equation}\label{s2}
			\begin{aligned}
				\mathbb E_j V_3^{\frac {\rho_2}{\rho_1}}(J(t), r(t))&\leq \mathbb E_j I^{-\rho_2}(t)\\&\leq e^{\frac{-\rho_2 \lambda t}2} i^{-\rho_2}\\&=e^{\frac{-\rho_2 \lambda t}2} V_3^{\frac {\rho_2}{\rho_1}}(j)(1-\rho_1 \gamma_k)^{-\frac {\rho_2}{\rho_1}}\\&=e^{\frac{-\rho_2 \lambda t}2} V_3^{\frac {\rho_2}{\rho_1}}(j)\exp\{\dfrac{\rho_2}{\rho_1} \left(-\ln(1-\rho_1\gamma_k)\right)\}\\&\leq e^{\frac{-\rho_2 \lambda t}2} V_3^{\frac {\rho_2}{\rho_1}}(j)e^{\frac{\rho_2 \lambda T_1}{4} }\\&\leq e^{\frac{-\rho_2 \lambda t}2} V_3^{\frac {\rho_2}{\rho_1}}(j)e^{\frac{\rho_2 \lambda t}{4} }\\&\leq e^{\frac{-\rho_2 \lambda t}4} V_3^{\frac {\rho_2}{\rho_1}}(j),
			\end{aligned}
		\end{equation}
		for any $k\in \mathcal M$, $t\in [T_1, n_\star T_1]$, $s\leq s_0$ and $i<\epsilon_2.$
		
		Since $\rho_2<\rho_1$ one can see from \eqref{ito1} that
		\begin{equation}\label{ito2}
			\begin{aligned}
				\Lom V_3^{\frac{\rho_2}{\rho_1}}(j)\leq -\dfrac{\rho_2 \lambda' V_3^{\frac{\rho_2}{\rho_1}}(j)}4, \quad \text{ for any } k\in \mathcal M, s\ge s_0, i\le \epsilon_2.
			\end{aligned}
		\end{equation}
		Since $g$ is bounded below, $\Lom V_3(j)\leq C V_3(j)$, which implies
			$$\mathbb E_j V_3(J(t))\leq \bar K, t\leq n_*T_1, i\geq \eps_2$$
			for some $\bar K$ which depends on $\eps_2$ and $T_1,n_*$. Next, let $\tau=\inf\{t>0: S(t)\leq s_0 \}$ and $\xi=\inf\{t>0: I(t)>\eps_2\}$.
		 Dynkin's formula yields
			\begin{align*}
				\mathbb E_j[e^{\rho_2\lambda'(\tau\wedge \xi\wedge n_\star T_1)}V_3^{\rho_2/\rho_1}(J(\tau\wedge\xi\wedge n_\star T_1))]&
				\le V_3^{\frac{\rho_2}{\rho_1}}(j)+\mathbb E_j\int_0^{\tau\wedge\xi\wedge n_\star T_1} e^{\rho_2\lambda's}\left(\Lom V_3^{\frac{\rho_2}{\rho_1}}(j(s))+\rho_2\lambda'V_3^{\frac{\rho_2}{\rho_1}}(J(s))\,ds\right)\\&\le V_3^{\frac{\rho_2}{\rho_1}}(j).
			\end{align*}
			Hence,
			\begin{align*}
				V_3^{\frac{\rho_2}{\rho_1}}(j)&\ge \mathbb E_j[\exp(\rho_2\lambda'(\tau\wedge \xi \wedge n_\star T_1))V_3^{\rho_2/\rho_1}(J(\tau\wedge\xi\wedge n_\star T_1))]\\&\ge\mathbb E_j\left[\1_{\{\tau\wedge\xi\wedge (n_*-1)T_1=\tau\}}V_3^{\frac{\rho_2}{\rho_1}}(J(\tau)) \right]+\mathbb E_j\left[\1_{\{\tau\wedge\xi\wedge (n_*-1)T_1=\xi\}}V_3^{\frac{\rho_2}{\rho_1}}(J(\xi)) \right]\\&+\exp(\rho_2 \lambda'(n_*-1)T_1)\mathbb E_j\left[\1_{\{(n_*-1)T_1<\tau\wedge\xi< n_*T_1\}}V_3^{\frac{\rho_2}{\rho_1}}(J(\tau\wedge\xi)) \right]\\&+\exp(\rho_2\lambda'n_*T_1)\mathbb E_j\left[\1_{\{(n_*-1)T_1\geq n_*T_1\}}V_3^{\frac{\rho_2}{\rho_1}}(J(n_*T_1)) \right].
			\end{align*}
			Using the strong Markov property and \eqref{s2} we obtain the following estimates
			\begin{equation}
			\begin{aligned}
				\mathbb E_j \left[\1_{\{\tau\wedge\xi\wedge (n_*-1)T_1=\tau\}}V_3^{\frac{\rho_2}{\rho_1}}(J(n_\star T_1)) \right]&\leq \mathbb E_j \left[\1_{\{\tau\wedge\xi\wedge (n_*-1)T_1=\tau\}}e^{\frac{-\rho_2 \lambda (n_\star T_1-\tau)}4} V_3^{\frac {\rho_2}{\rho_1}}(J(\tau)) \right] \\& \leq \mathbb E_j \left[\1_{\{\tau\wedge\xi\wedge (n_*-1)T_1=\tau\}}V_3^{\frac {\rho_2}{\rho_1}}(J(\tau)) \right].
			\end{aligned}
		\end{equation}
			Furthermore,
			\begin{align}
				\mathbb E_j\left[\1_{\{(n_*-1)T_1<\tau\wedge\xi< n_*T_1\}}V_3^{\frac{\rho_2}{\rho_1}}(J(n_\star T_1)) \right]&\leq \bar{K}\cdot \mathbb E_j\left[\1_{\{(n_*-1)T_1<\tau\wedge\xi< n_*T_1\}}V_3^{\frac{\rho_2}{\rho_1}}(J(\tau \wedge \xi))\right].
			\end{align}
			Therefore,
			\begin{align*}
				V_3^{\frac{\rho_2}{\rho_1}}(j)&\ge\mathbb E_j\left[\1_{\{\tau\wedge\xi\wedge (n_*-1)T_1=\tau\}}V_3^{\frac{\rho_2}{\rho_1}}(J(n_\star T_1)) \right]+\mathbb E_j\left[\1_{\{\tau\wedge\xi\wedge (n_*-1)T_1=\xi\}}V_3^{\frac{\rho_2}{\rho_1}}(J(n_\star T_1)) \right]\\&+(\bar K)^{-1}\exp(\rho_2 \lambda'(n_*-1)T_1)\mathbb E_j\left[\1_{\{(n_*-1)T_1<\tau\wedge\xi< n_*T_1\}}V_3^{\frac{\rho_2}{\rho_1}}(J(n_\star T_1)) \right]\\&+\exp(\rho_2\lambda'n_*T_1)\mathbb E_j\left[\1_{\{(n_*-1)T_1\geq n_*T_1\}}V_3^{\frac{\rho_2}{\rho_1}}(J(n_*T_1)) \right]\\&\ge M  \mathbb E_j V_3^{\frac{\rho_2}{\rho_1}}(J(n_*T_1)),
			\end{align*}
			where $M\ge 2$. Hence, $\E_j V_3^{p_2/p_1}(n_*T_1)\leq \kappa V_3^{p_2/p_1}(j)$
			for some $\kappa\le 1/2$.
			
			Finally, by Theorem 2.1 from \cite{tuominen1994subgeometric} we get
			$$\lim_{t\to\infty} e^{\rho t}\|P(t, j, \cdot)-\bmu^*(\cdot)\|=0,\, \forall\, j:=(s,i,q,r,k)\in\R^{4,*}_+\times\M,\,\text{ for some constant }\,\rho>0$$
			and the proof is complete.
			
	\end{proof}

	\section{Simulations and Discussion}\label{sec:disc}
	\subsection{Comparison with the literature.}
	There are a plethora of results for stochastic epidemic models with various types of noise. However, most of the available results have artificial assumptions and do not provide sharp results. In our paper we can consider a large class of models and are able to prove under natural assumptions that a single parameter, $\lambda$, determines the long term fate of the disease. A paper that is closely related to ours is \cite{Zhou21} where the authors study a SIQRS model when there is only white noise. The authors showcase conditions under which the pandemic persists but do not provide any results for extinction. Moreover, their results are not sharp because they do not establish necessary and sufficient conditions for persistence. Nevertheless, \cite{Zhou21} provided the inspiration for the current paper. 
	\subsection{Linear incidence rate.}
	In this section we will assume that $F(S,I)=\beta S$.
	\subsubsection{The deterministic model.}
	The behavior of \eqref{ww0} is well known. Set 
	\[
	R_0 = \frac{A\beta}{\mu(\mu+\gamma_1+\gamma_2)}.
	\]
	Using the results from \cite{Zhou21} one can show that if $R_0>1$ then there is a globally stable equilibrium $E$ in $(0,\infty)^4$ and the disease persists. Instead, if $R_0\leq 1$ the disease goes extinct. 
	This expression shows that if $\gamma_2$ is large it is harder for the disease to persist. If $\gamma_1$ is large it is also harder for the disease to persist but this would mean the disease kills people very fast and therefore cannot spread. 
	\subsubsection{The effects of white noise.}
	Note that if there is no switching and $F(S,I)=\beta S$ our criterion boils down to checking whether
	\begin{equation}\label{e_lam1}
		\lambda = \frac{A\beta}{\mu} - (\mu+\gamma_1+\gamma_2) - \frac{\sigma_2^2}{2}
	\end{equation}
	is positive or negative. In this setting it seems that the environmental fluctuations are detrimental to the spread of the disease since when $\lambda<0$ the disease dies out. Only when $\lambda>0$ can the disease persist long term. We note that in \cite{Zhou21} the authors prove that the disease persists when
	\[
	R:=\frac{A\beta}{\mu+\frac{\sigma_1^2}{2}} - (\mu+\gamma_1+\gamma_2) - \frac{\sigma_2^2}{2}>0.
	\]
	However, since $R<\lambda$, this result is not sharp. If  $R < 0 < \lambda$ the results from \cite{Zhou21} are not able to show the persistence of the disease, whereas our results do show that the disease persists.
	
	\subsubsection{The impact of quarantine.}
	\begin{figure}
		\begin{center}
			\begin{tikzpicture}[
				skip loop/.style={to path={-- ++(0,#1) -| (\tikztotarget)}},
				thick,
				nonterminal/.style = {
					rectangle,			
					minimum size = 10mm,			
					thick,
					draw = black,
				}
				]
				\matrix [row sep=10mm, column sep=15mm] {
					\node (pa){}; &&&&&&\\
					\node (s) [nonterminal] {$S$}; &&
					\node (i) [nonterminal] {$I$}; && 
					\node (r) [nonterminal] {$R$}; \\
					\node (pm1){}; && 
					\node (pmg1){}; && 
					\node (pm2){}; \\ 
				};
				\path (pa) edge[->, shorten >=5pt, transform canvas={xshift=-0.5ex}] node[left]{$A$} node[right, transform canvas={yshift=1ex}]{$\gamma_5 R$} (s);
				\path (s) edge[->, shorten <=5pt, shorten >=5pt] node[above]{$F(S,I) I$} (i);
				\path (i) edge[->, shorten <=5pt, shorten >=5pt] node[above]{$\gamma_3 I$} (r);
				\path (r) edge[->, shorten <=5pt, shorten >=5pt, transform canvas={xshift=0.5ex}, skip loop=10mm ] node[right]{} (s);
				\path (s) edge[->, shorten <=5pt, shorten >=5pt] node[left]{$\mu S$}(pm1);
				\path (i) edge[->, shorten <=5pt, shorten >=5pt] node[left]{$(\mu + \gamma_1) I$}(pmg1);
				\path (r) edge[->, shorten <=5pt, shorten >=5pt] node[left]{$\mu R$}(pm2);
			\end{tikzpicture}
		\end{center}
                \caption{Schematic representation of the counterfactual SIRS system used in numerical experiments in Section \ref{sec:disc}.}
		\label{fig:syst_2}
	\end{figure}
	
        \begin{figure}
            \includegraphics[width=0.45\textwidth]{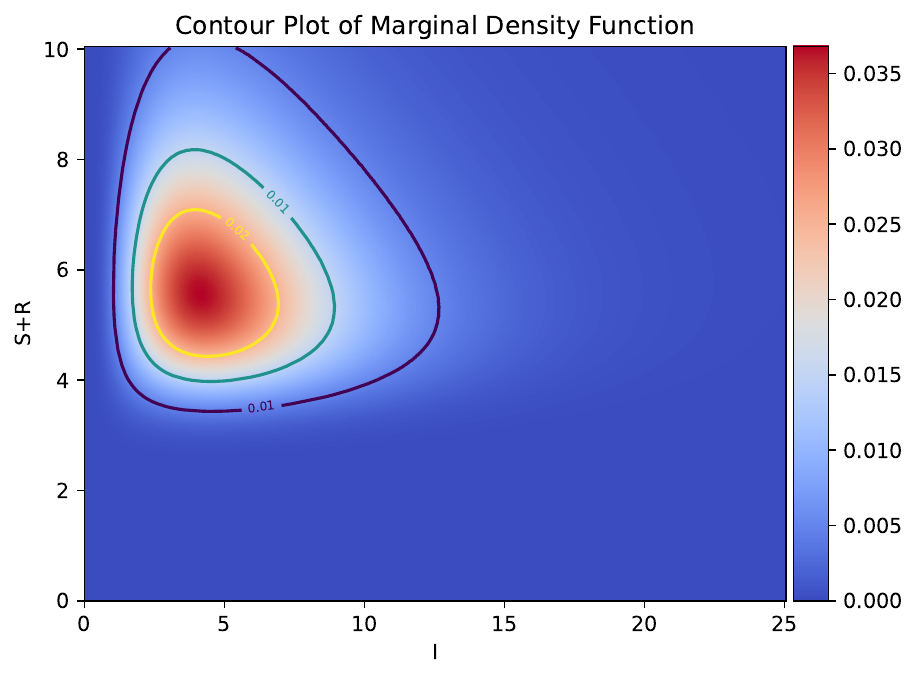}
            \includegraphics[width=0.45\textwidth]{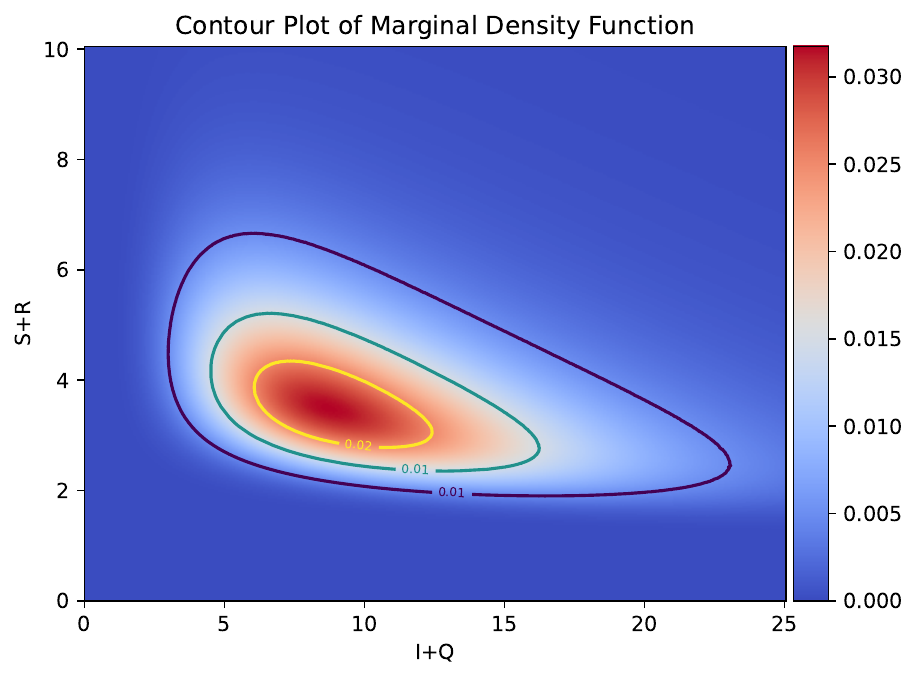}
            \caption{Side by side comparison of the estimation of two SSDE models, one without a quarantined group (SIRS) on the left, and one with (SIQRS), on the right.
            Both models have a linear infection incidence rate, given by $F(S, I) = \beta S$.
            The parameters for the SIRS  model are $A = 20$, $\beta = 1$, $\sigma_. = 1$, $\mu = \gamma_1 = \gamma_3 = \gamma_5 = 1$.
            For the SIQRS model, the other parameters are the same, but now $\gamma_1 = 0$ and $\gamma_2 = \gamma_4 = 1$.
            Using \eqref{e_lam1} we note that for both models $\lambda = 16.5 > 0$.
            The approximate population expectations at stationarity for the SIRS model are $\E\overline{S} \approx 3.5$, $\E\overline{I} \approx 6.6$, and $\E\overline{R} \approx 3.3$.
            The expectation of the total population is $\approx 13.4$.
            In the SIQRS model, the approximate population expectations at stationarity are $\E\overline{S} \approx 2.5$, $\E\overline{I} \approx 9.5$, $\E\overline{Q} \approx 3.2$, and $\E\overline{R} \approx 1.6$.
            The expectation of the total population is $\approx 16.8$.
            }
            \label{fig:contour_1}
        \end{figure}
	
	We simulate two simple models with linear incidence and no switching, one with a quarantined group and one without, to determine the impact of introducing a quarantined group.
	The crucial assumption is that the excess mortality is shifted to the quarantined group in the quarantine model.
	No other change is introduced.
	In particular, the quarantined group \emph{does not} have an improved mortality rate.
	
	The diagram in Figure \ref{fig:syst} summarises the model and the parameters used in the SIQRS model.
        Figure \ref{fig:syst_2} summarises the counterfactual SIRS model that we will use to contrast the effect of adding a quarantined group.
	Figure \ref{fig:contour_1} shows the marginal probability density functions at stationarity. We focus on population composites of interest: $S+R$, the healthy group, and $I$ or $I + Q$, the sick group.
	An approximation for the stationary distribution is obtained using the Monte Carlo method with the Milstein refinement \cite{giles2006improved}. Note that by slightly modifying Theorem \ref{t:pers} we can show that the randomized occupation measures (empirical measures) converge exponentially fast to the stationary distribution. This implies that MC methods will work well.
	
	The unusual conclusion that we can draw is that simply isolating individuals in a quarantined group -- without providing any improvements in treatment leading to decreased mortality -- can increase the population overall, at the expense of having a larger share of infected individuals.
	This can intuitively be understood by noting that the rate of movement into the $I$ category is in proportion to the population size already in the $I$ category.
	By making individuals non-infective -- i.e., by moving them into a category $Q$ -- we are therefore reducing the burden of the disease on total population.

	\subsubsection{The effects of switching.}
        Figure \ref{fig:switching} shows a plot where a non-trivial effect of switching is added to the simple linear SIQRS model from Figure \ref{fig:contour_1}.
        Now the parameter $\beta$ switches between the values $\{1, 4\}$, according to the switching rate $q_{ij} = q = 0.1$ ($i \neq j$).
        We can observe that the marginal density cross-section shows a bimodal density.
        Numerical experiments suggest that this is a common occurrence in models with switching, as the density tends to approximate an overlap of the densities of two simpler models without switching -- in this case two models with fixed parameters $\beta = 1$ and $\beta = 4$.
        \begin{figure}
            \includegraphics[width=0.45\textwidth]{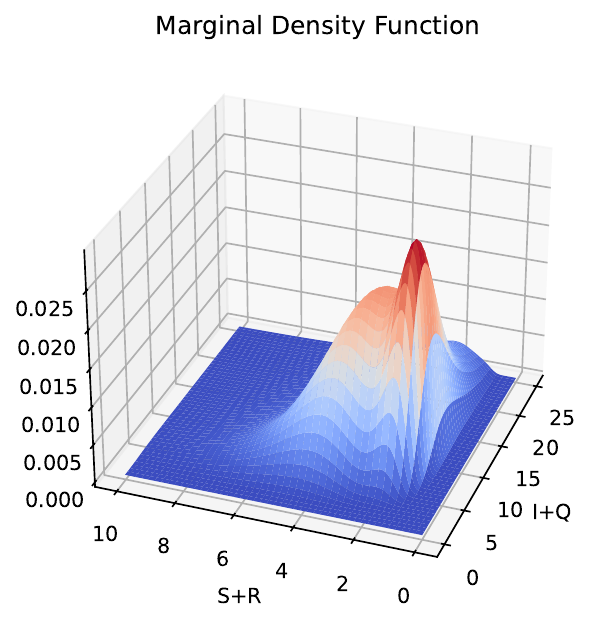}
            \includegraphics[width=0.45\textwidth]{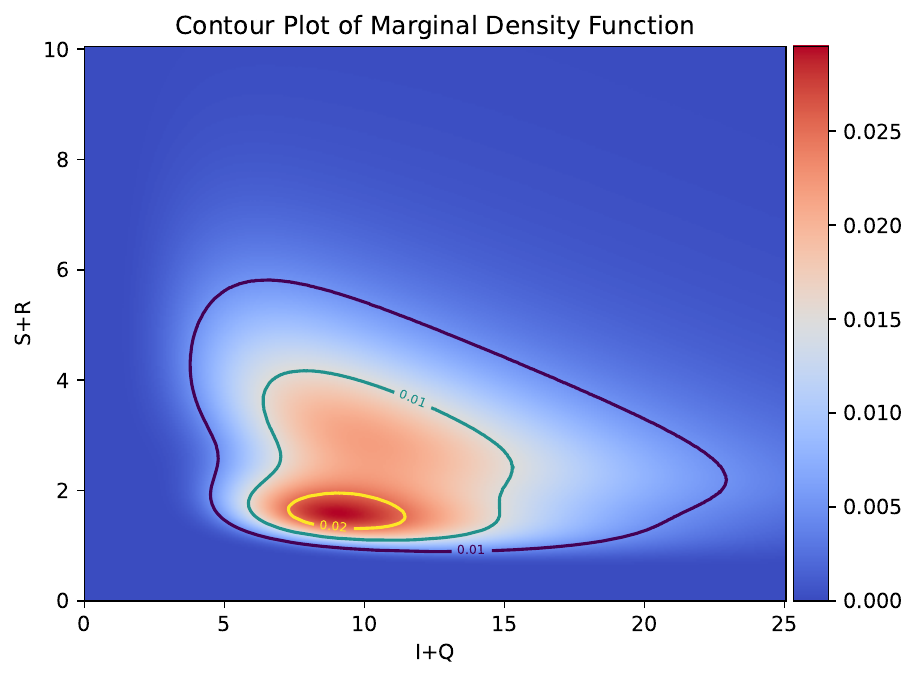}
            \caption{The estimation of a SIQRS SSDE model with switching, on the left the 3D plot of marginal density of a cross-section, and on the right the contour equivalent plot.
            Note the fact that the density is bimodal.
            The model has a linear infection incidence rate, given by $F(S, I) = \beta S$.
            The parameter values are those summarised in the caption of Figure \ref{fig:contour_1}, but with the exception that now $\beta$ switches between the values $1$ and $4$ according to the switching rate $q_{ij} = q = 0.1$.
            The approximate population expectations at stationarity for the model are $\E \overline{S} \approx 1.6$, $\E\overline{I} \approx 10.1$, and $\E\overline{Q} \approx 3.4$, and $\E\overline{R} \approx 1.7$.
            The expectation of the total population is $\approx 16.6$.
            }
            \label{fig:switching}
        \end{figure}
	
	\subsection{Nonlinear incidence rates.} We note that most of papers which look at related SIQS/SIQRS models are only able to treat linear incidence rates \cite{WL23, Zhou21}, and thus do not capture some of the most important models. The seminal paper \cite{LLI86} has shown that non-linearities appear realistically and that the dynamics changes significantly, even in the absence of noise. In \cite{ruan2003dynamical} the authors show that stable and unstable limit cycles can appear in SIRS models with nonlinear incidence rates. 
	
	\subsubsection{Holling type II functional response.}
	We explore in our simulations other types of incidence rates used in the wider literature, for example $F(S, I) = \beta S/(1 + m_1S)$, also known as Holling type II.
	Figure \ref{fig:contour_2} shows that the effect of introducing a quarantined group in such a model with no switching is broadly similar to the linear response model.
        That is, we see an increase in the total average population, but at the expense of having a larger fraction of sick individuals.
	
        \begin{figure}
            \includegraphics[width=0.45\textwidth]{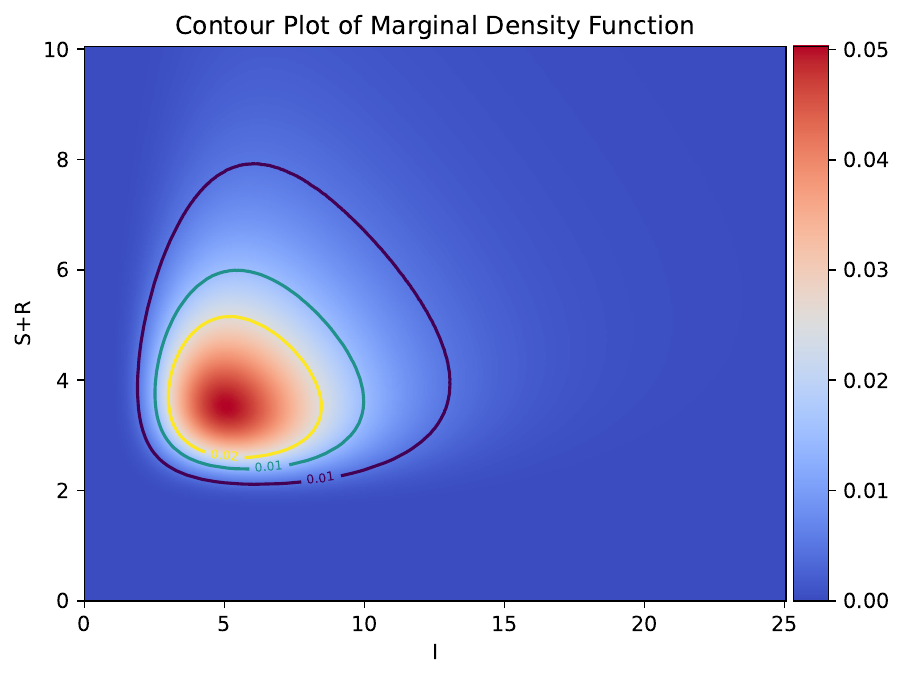}
            \includegraphics[width=0.45\textwidth]{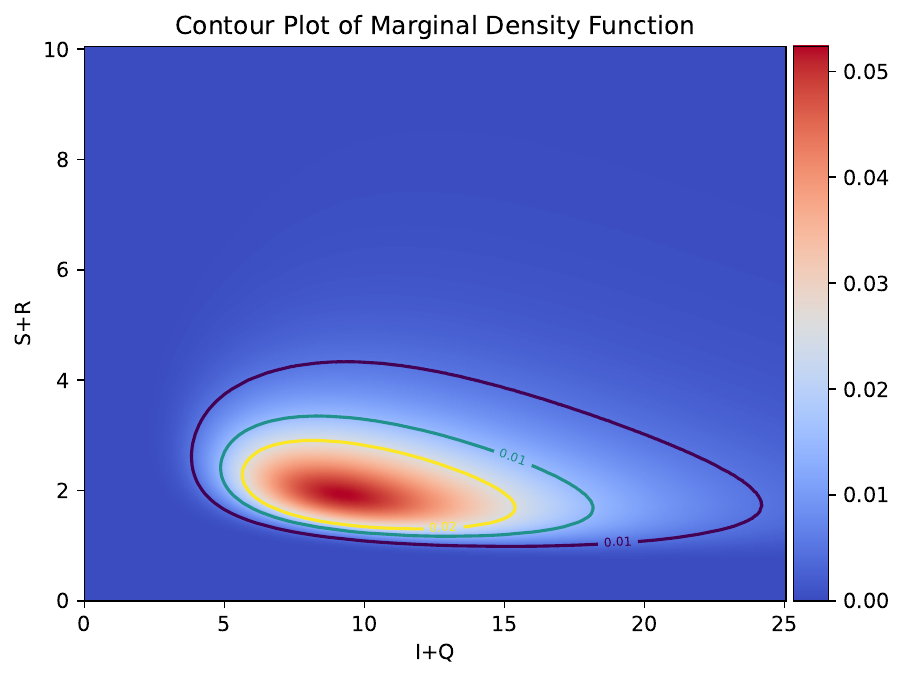}
            \caption{Side by side comparison of the estimation of two SSDE models, one without a quarantined group (SIRS) on the left, and one with (SIQRS), on the right.
            Both models have a Holling type II infection incidence rate, given by $F(S, I) = \beta S/(1 + m_1 S)$, where $m_1 = 0.1$.
            The other parameters for the model, summarised in Figure \ref{fig:syst}, are $A = 20$, $\beta = 3$, $\sigma_. = 1$, $\mu = \gamma_1 = \gamma_3 = \gamma_5 = 1$.
            For the SIQRS model, the other parameters are the same, but now $\gamma_1 = 0$ and $\gamma_2 = \gamma_4 = 1$.
            The approximate population expectations at stationarity for the SIRS model are $\E \overline{S} \approx 1.4$, $\E\overline{I} \approx 7.5$, and $\E\overline{R} \approx 3.7$.
            The expectation of the total population is $\approx 12.5$.
            In the SIQRS model, the approximate population expectations at stationarity are $\E\overline{S} \approx 0.9$, $\E\overline{I} \approx 10.4$, $\E\overline{Q} \approx 3.5$, and $\E\overline{R} \approx 1.7$.
            The expectation of the total population is $\approx 16.5$.
            }
            \label{fig:contour_2}
        \end{figure}

	\subsubsection{Beddington-DeAngelis functional response.}
        \begin{figure}
            \includegraphics[width=0.45\textwidth]{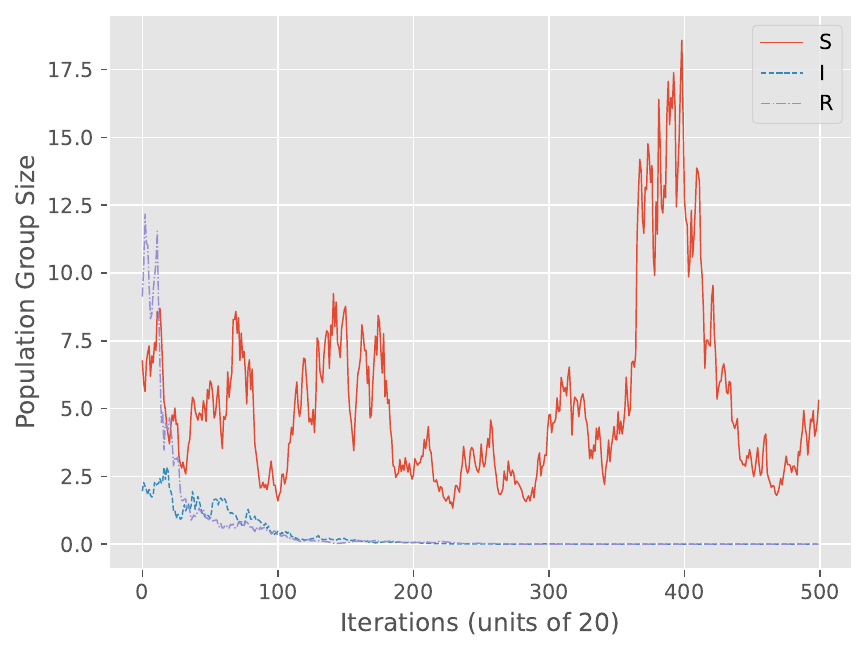}
            \includegraphics[width=0.45\textwidth]{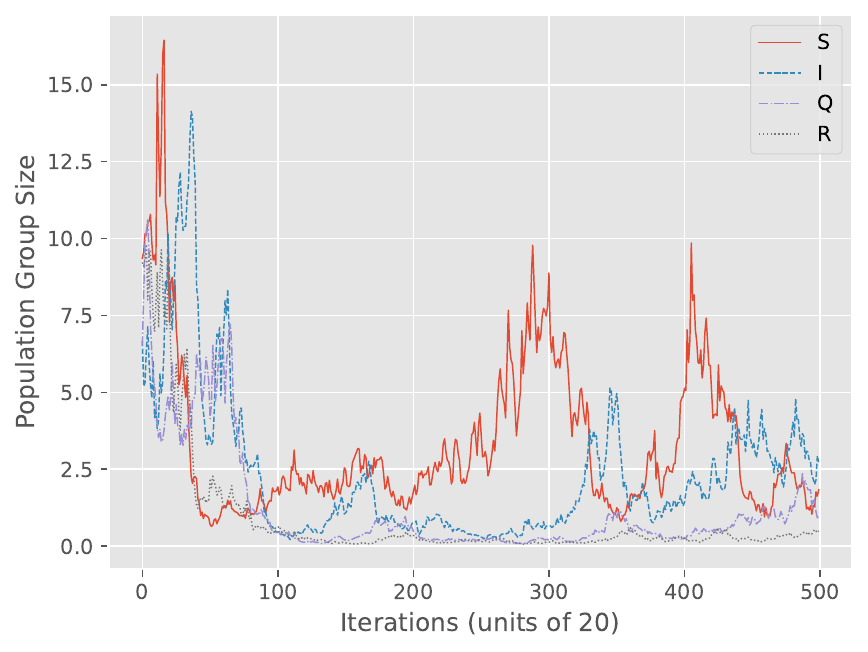}
            \caption{Side by side comparison of the estimation of two SSDE models, one without a quarantined group (SIRS) on the left, and one with (SIQRS), on the right.
            Both models have a Beddington-DeAngelis infection incidence rate, given by $F(S, I) = \beta S/(1 + m_1 S + m_2 I)$, where $m_1 = m_2 = 0.1$.
            The common parameters for the two models are $A = 5$, $\beta = 1$, $\sigma_. = 1$, $\mu = \gamma_1 = \gamma_3 = \gamma_5 = 1$.
            For the SIQRS model, the other parameters are the same, but now $\gamma_1 = 0$ and $\gamma_2 = \gamma_4 = 1$.
            The horizontal represents periods of simulation, with one plotted point for every $20$ periods.
            One period of simulation corresponds to a discrete increment of time $\Delta t = 0.001$.
            Note that $\lambda = -0.1\overline{6}$ for the SIRS model on the left, and $\lambda = 0.8\overline{3}$ for the SIQRS model on the right.
            The long term average of the total population for the SIRS model is $\approx 5.0$.
            For the SIQRS model, it is $\approx 4.8$.
            }
            \label{fig:timeseries_1}
        \end{figure}
	$F(S, I) = \beta S/(1 + m_1 S + m_2 I)$ 
	For both the SIQRS and SIRS models we find using \cite{D18} that, in the absence of switching,
	\[
	\lambda = \frac{A\beta}{\mu + m_1 A} -c_2-\frac{\sigma_2^2}{2}.
	\]
        We estimate the time evolution of such models in Figure \ref{fig:timeseries_1}.
        The parameters are such that we have a positive $\lambda$ for the SIRS model, which changes to a slight negative $\lambda$ for the SIQRS model.
        The number of infected and recovered goes to $0$ in the SIRS model, but stays positive in the SIQRS model, since $c_2$ is decreased by the value change of $\gamma_1$, from $1$ to $0$.
        Observe that, in this case, the introduction of a quarantined group has a negative effect on the long term prevalence of the disease.
        The long term average of the population is also slightly reduced in the model with quarantine. This shows how quarantine can be beneficial for certain diseases.

\subsection{Alternative formulations.}
In the previous experiments, we assumed a particular way in which the introduction of the quarantined group `affects' the baseline SIRS system.
Specifically, we have shifted $\gamma_1$ to $0$, to capture the somewhat strong assumption that excess mortality appears only in the quarantined group.
In other words, all members of the population are quarantined before they suffer greatly from the disease.
There are no other changes in parameters in our contrasting comparisons, although we could have made other parameter changes, say justified on empirical grounds.

Another debatable modeling choice is in the way the quarantined group is a staging group for all infected individuals before they reach the recovered group.
This excludes applications in which the quarantine is implemented imperfectly, or in which the disease leads to infected individuals that are able to spread the disease but fall below a certain diagnosis threshold, before fully recovering.
E.g., the symptoms of a diseased individual may be too light to lead to detection and quarantine.
To include such set-ups, we would need a branching in the way the individuals move between the groups.
 \clearpage 
\bibliographystyle{amsalpha}
\bibliography{SIQRS_paper_bibliography}

\end{document}